\definecolor{mycyan}{cmyk}{.3,0,0,0}
\newtheorem{lem}{Lemma}
\newtheorem{prop}{Proposition}
\newtheorem{defn}{Definition}
\newtheorem{rem}{Remark}
\begin{document}
\title{Non-parametric Message Important Measure: Storage Code Design and Transmission Planning for Big Data}

\author{Shanyun~Liu, Rui~She, Pingyi~Fan,~\IEEEmembership{Senior Member, ~IEEE}, Khaled~B.~Letaief,~\IEEEmembership{Fellow, ~IEEE}\\

\thanks{
Shanyun~Liu, Rui~She and Pingyi Fan are with Tsinghua National Laboratory for Information Science and Technology(TNList) and the Department of Electronic Engineering,  Tsinghua University,  Beijing,  P.R. China, 100084. e-mail: \{liushany16,~sher15@mails.tsinghua.edu.cn, fpy@tsinghua.edu.cn.\} }
\thanks{Khaled~B.~Letaief is with the Department of Electrical and Computer Engineering, HKUST, Hong Kong. email:
\{eekhaled@ust.hk\} and Hamad bin Khalifa University, Qatar. email: \{kletaief@hbku.edu.qa.\}
}
\thanks{This work was supported in part by the National Natural Science Foundation of China (NSFC) under Grant 61771283, and in part by the China Major State Basic Research Development Program (973 Program) under Grant 2012CB316100(2).}
}

\maketitle

\graphicspath{{Figures/}}

\begin{abstract}

Storage and transmission in big data are discussed in this paper, where message importance is taken into account. Similar to Shannon Entropy and Renyi Entropy, we define non-parametric message important measure (NMIM) as a measure for the message importance in the scenario of big data, which can characterize the uncertainty of random events. It is proved that the proposed NMIM can sufficiently describe two key characters of big data: rare events finding and large diversities of events. Based on NMIM, we first propose an effective compressed encoding mode for data storage, and then discuss the channel transmission over some typical channel models. Numerical simulation results show that using our proposed strategy occupies less storage space without losing too much message importance, and there are growth region and saturation region for the maximum transmission, which contributes to designing of better practical communication system.

\end{abstract}

\begin{IEEEkeywords}
Non-parametric, Message important measure, Big Data, Compressed Storage, NMIM loss distortion, Channel Transmission.
\end{IEEEkeywords}

\IEEEpeerreviewmaketitle

\section{Introduction}\label{Sec:Introduction}

Recent studies show that the problem of data storage and message transmission is becoming much more important and intractable in the era of big data for the contradiction between the limited traditional hardware storage equipment and the sharply increasing data deluge \cite{chen2014big}. On the one hand, it is usually neither practical nor worth to store all the message as before when facing the huge amount of data, such as the massive number of connected devices in the Internet of Things \cite{Gharbieh2017Spatiotemporal}. On the other hand, the exponentially increasing data traffic will present huge challenges to transmission design in the future \cite{bi2015wireless}. One of the effective solutions for storing and transmitting the massive amount of data is to directly compress the data sample first and then recover them from compressed samples with minimized information loss \cite{pourkamali2017preconditioned}. In fact, there is a high level of redundancy in datasets, so redundancy reduction and data compression have little effect on the potential values of the data \cite{chen2014big}. The lossy data compression has been discussed in many scenarios, e.g., \cite{aguerri2016lossy,cui2012distributed,jalali2012block,sechelea2016rate}. In addition, the lossy transmission of a bivariate Gaussian sources over a Gaussian broadcast channel was studied in \cite{koken2017joint}. It is noted that the same amount of information loss of different events is treated equivalent in these references. However, it may have various costs in different events. While the errors in useless message may have little impact, the errors in important message can bring disastrous consequences. 

Actually, in many cases, sparse events are exceedingly crucial for the fact that people generally concentrate more on the important part of message rather than the whole message itself. For instance, only a few illicit IDs need to be checked in order to prevent financial frauds in synthetic ID detection \cite{phua2010comprehensive}. Also, only a few individuals need to be closely checked in anti-terrorist system \cite{zieba2015counterterrorism}. From ancient times, human beings preferred to record when and where nature disaster events happened (i.e. earthquakes, mountain torrents, hurricanes) rather than to record much more common things of daily life. In this case, the rare events usually contain the most information. Therefore, the most important information will be substantially retained if the minority subset of these rare events is recorded, which coincides with the cognitive mechanism of human beings. Generally speaking, people would like to take the events with high probability for granted and would not like to waste storage or transmission resource for them. In fact, the information itself may not have difference from the viewpoint of transmission and storage, but it will be given different importance when received by individuals and/or organizations. Thus, in this paper, we design the strategy of storage and transmission, in which the data is dynamically compressed based on the importance of events, in the case where the rare events contain the most information, which is the minority subset stated above.

Previous studies on the minority subset mainly focused on minority subset detecting under rate-distortion theory \cite{crammer2004needle,ando2006information}, graph-based rare category detection \cite{he2008graph}, hypothesis testing \cite{katz2017distributed}, or detecting with the aid of time-evolving of graphs \cite{zhou2015rare}. Information theory has been considered as one of the most important theories in the fields of data storage and communications \cite{Elements,verdu1998fifty}. Information theory gives the optimal coding and a tight bound of lossless data compression under the typical set \cite{Elements}. However, these results can not be applied to lossy data compression because it is derived in lossless coding. The other existing information measures, such as Renyi divergence, f-divergence or Fisher information, are usually used in communication theory, data analysis, hypothesis testing or estimation of statistical parameter \cite{van2014renyi,akaike1998information}. However, all these existing information measures characterize all the information as a whole, focusing on typical sets, the events with higher probabilities, rather than the minority subset of events. Therefore, a specific information measure is needed to characterize the minority subsets.

Message Important Measure (MIM) is a fairly useful tool to describe the message importance, which is commonly used in minority subset detection as an information measure \cite{fan2016message}. The definition is given by \cite{fan2016message}
\begin{equation}
  L\left( {{\textbf{p}},\varpi } \right) = {L_\varpi }({\textbf{p}}) = \log \sum\limits_{i = 1}^n {{p_i}\exp \left\{ {\varpi \left( {1 - {p_i}} \right)} \right\}},
\end{equation}
where ${\textbf{{p}}}=(p_1,p_2,...,p_n)$ is a given probability distribution and parameter $\varpi$ is the importance coefficient.  By focusing on those small-probability events, MIM is very effective in minority subset detection \cite{fan2016message}. The parameter selection of MIM in big data is discussed in \cite{she2017focusing}. In fact, these two literatures both examined parametric MIM. The flexible parameter in MIM helps to emphasize a certain probability element in a distribution, which is very useful in anomaly detection. However, this parametric MIM may not be applicable in some scenarios for the fact that the optimization of the parameter needs some prior knowledge and also takes excessive computing resources, which makes the MIM in \cite{fan2016message} not universal. To overcome this problem, we expand MIM to non-parametric MIM (NMIM), which is more general. Some basic properties and applications was preliminarily discussed in \cite{liu2017non}. The more general properties, including NMIM loss distortion and more applications, including the compressed storage code design and transmission planning will be discussed in this paper.

Similar to Shannon entropy, NMIM is only related to the probability distribution. Therefore, it can be easily utilized without the complex process of parameter determination. For convenience of calculations, NMIM is defined as a logarithm of the mean value of the events' importance. Actually, there are many interesting properties for this new information measure. One important result is that the gap between NMIM of total events and that of minimum probability event will be less than a constant if either sparsity or diversity is satisfied.

Actually, layered transmission schemes and layered coding have been widely discussed in the communication systems designing, e.g. \cite{katz2017distributed,Bhattad2008distortion,wang2016two,ethakaset2011joint}. In these references, different layers perform disparate things by using different information. This paper proposes two-layer coding mechanism, where the first layer performs the amount of information by Shannon entropy, and the second layer uses NMIM as importance marking. 

We then discuss, by means of NMIM, the problem of compressed storage and provide strategy that enormously reduces the precious storage space. This is possible because of allocating resources rationally by providing more code length for important events and little code length for nonsignificant events. 

In order to design better transmission strategy, we analyze the change of NMIM during transmission. Then, an achievable NMIM-loss-distortion region is also derived, which is ternary, comprised of the NMIM loss, the distortion and the events' probability distribution. Based on this NMIM-loss-distortion region, one solution of the maximum transmission is proposed which turns to maximizing the entropy in the receiving terminal with limited importance loss.

The main contribution of this paper can be summarized as follows. We give the definition and properties of NMIM, which is an effective measure of importance information. Based on NMIM, a compressed storage strategy in big data is proposed. Furthermore, we analyze the change of NMIM during the transmission and propose the NMIM-loss-distortion region. With the developed results, we solve the problem of the maximum transmission.

The rest of this paper is organized as follows. Section II introduces the definition of NMIM. Section III gives a detailed discussion of its properties. The speciality of minimum probability is also investigated. In Section IV, the compressed coding problem is formulated as an optimization problem and the explicit solution is presented. The performance of transmission is discussed by means of NMIM to illustrate the practicality of NMIM in Section V. We also study the maximum transmission problem with limited channel capacity and importance loss in this section. Section VI presents the simulation results to certificate our developed theories in this paper and discusses the numerical results. Finally, we present the conclusion in Section VII.

\section{Non-parametric Message Important Measure}
In this section, a new non-parametric information measure will be introduced, which is called NMIM. Different from previous studies, in which Shannon entropy uses logarithm operator and Renyi divergence uses polynomial operator \cite{van2014renyi}, NMIM adopts the exponential form because exponential operator can help it magnify the minority subsets.

 \begin{defn}\label{Non-parametric MIM}
\par For a given probability distribution ${\textbf{{p}}}=(p_1,p_2,...,p_n)$ of finite alphabet (each element $0<p_i \le 1$), the non-parametric message importance message is defined as
\begin{equation}
\mathscr{L}\left( {\textbf{{p}}} \right) = \log \sum\limits_{i = 1}^n {{p_i}{e^{{{1 - {p_i}} \over {{p_i}}}}}} .
\end{equation}
\end{defn}

This equation can also be written as
\begin{equation}
\mathscr{L}\left( {\textbf{\emph{p}}} \right) = \log \sum\limits_{i = 1}^n {{p_i}{e^{{1 \over {{p_i}}}}}}  - \log e.
\end{equation}

In fact, ${e^{{{1 - {p_i}} \over {{p_i}}}}}$ is defined as the importance measure of the event with probability of $p_i$. In some case where we need to discuss the importance of one specific event, the following definition is given for convenience.

\begin{defn}\label{MIM of Event}
The importance of the event whose probability is $p_i$ is given by
\begin{equation}\label{equdefn:MIM of Event}
M(p_i)=p_i e^{{1-p_i} \over {p_i}}.
\end{equation}
\end{defn}
As a result, NMIM can also be written as
\begin{equation}\label{equ:MIM log defn}
\mathscr{L}\left( {\textbf{\emph{p}}} \right) = \log \sum\limits_{i = 1}^n M(p_i).
\end{equation}
The total importance is the sum of importance of all events, and the logarithm operator in (\ref{equ:MIM log defn}) is only to reduce the magnitude of the numerical results.

Let $L(p_i)$ be the logarithm value of the importance of a specific event with probability $p_i$, and it can be written as
\begin{equation}
L\left( p_i \right) =\log M(p_i)= \log  ({{p_i}{e^{{{1 - {p_i}} \over {{p_i}}}}}} ).
\end{equation}

\section{The Properties of NMIM}
In this section, the properties of NMIM is discussed in details. Natural logarithm is adopted in this part.

\subsection{ Non-negativity}
It is obvious that $0 < p_i \leq 1$, so $e^{\frac{1-p_i}{p_i}}\geq 1$ for the fact that $\frac{1-p_i}{p_i}\geq0$ and we have
\begin{equation}
\mathscr{L}\left( {\textbf{\emph{p}}} \right) = \log \sum\limits_{i = 1}^n {{p_i}{e^{{{1 - {p_i}} \over {{p_i}}}}}}  \ge \log \sum\limits_{i = 1}^n {{p_i}}  = 0,
\end{equation}
which means NMIM is non-negativity.

\subsection{Uniform Distribution}
For the uniform distribution $u=(1/n,1/n,...,1/n)$, we can obtain
\begin{equation}
\mathscr{L}\left( {\textbf{\emph{u}}} \right) = \log \sum\limits_{i = 1}^n {{1 \over n}{e^{{{1 - {1 \over n}} \over {{1 \over n}}}}}}  = \log \sum\limits_{i = 1}^n {{1 \over n}{e^{n - 1}}}  = n - 1 .
\end{equation}


\subsection{ Lower bound}
For any probability distribution ${\textbf{{p}}}=(p_1,p_2,...,p_n)$ without zero elements, it is noted that
\begin{equation}\label{equ:lowerbound}
\begin{split}
\mathscr{L}\left( {\textbf{\emph{p}}} \right)& = \log \sum\limits_{i = 1}^n {{p_i}{e^{{{1 - {p_i}} \over {{p_i}}}}}}  \ge \log {e^{\sum\limits_{i = 1}^n {{p_i}} {{1 - {p_i}} \over {{p_i}}}}}= \log {e^{\sum\limits_{i = 1}^n {(1 - {p_i})} }} = n - 1,
 \end{split}
\end{equation}
where the inequality is according to Jensen's inequality \cite{Elements}. The equality holds if and only if all the $p_i$ are equal, i. e. $p_i={1}/{n}$, which means the uniform distribution achieves the lower bound.



\subsection{Monotonicity}\label{sec:monotonicity}
For two probability $p_1$ and $p_2$, if $p_1<p_2$, then we will have $L(p_1)>L(p_2)$.
\begin{proof}
Define $f(x) = x{e^{{1 -x\over x }}}$ and the derivation is given by $f'(x) = \left( {1 - {1 \over x}} \right){e^{{1-x \over x}}}$.
When $0<x<1$, it is easy to check that $f'(x)<0$, so that $f(x)$ is a monotonic decreasing function. Since the outer logarithmic function does not change the function monotonicity, we have $L(p_1)>L(p_2)$ if $p_1<p_2$.
\end{proof}

\subsection{Geometric Center and Barycenter}
For any probability distribution ${\textbf{{p}}}=(p_1,p_2,...,p_n)$ without zero elements, it is noted that
\begin{equation}\label{equ:G and B Center}
\sum\limits_{i = 1}^n {{p_i}L\left( {{p_i}} \right)}  \le \sum\limits_{i = 1}^n {{1 \over n}L\left( {{p_i}} \right)} 
\end{equation}
\begin{proof}
\begin{flalign}\label{equ:G and B Center1} 
\sum\limits_{i = 1}^n {{1 \over n}L\left( {{p_i}} \right) }-& \sum\limits_{i = 1}^n {{p_i}L\left( {{p_i}} \right)}  = \sum\limits_{{p_i} \le 1/n} {\left( {{1 \mathord{\left/
 {\vphantom {1 n}} \right.
 \kern-\nulldelimiterspace} n} - {p_i}} \right)L({p_i})}  - \sum\limits_{{p_i} > 1/n} {\left( {{p_i} - {1 \mathord{\left/
 {\vphantom {1 n}} \right.
 \kern-\nulldelimiterspace} n}} \right)L({p_i})}  \\
& \ge \sum\limits_{{p_i} \le 1/n} {\left( {{1 \mathord{\left/
 {\vphantom {1 n}} \right.
 \kern-\nulldelimiterspace} n} - {p_i}} \right)L({1 \mathord{\left/
 {\vphantom {1 n}} \right.
 \kern-\nulldelimiterspace} n})}  - \sum\limits_{{p_i} > 1/n} {\left( {{p_i} - {1 \mathord{\left/
 {\vphantom {1 n}} \right.
 \kern-\nulldelimiterspace} n}} \right)L({1 \mathord{\left/
 {\vphantom {1 n}} \right.
 \kern-\nulldelimiterspace} n})}   \tag{\theequation a}\label{G and B Center1 a}\\
&= \left( {\sum\limits_{{p_i} \le 1/n} {{1 \mathord{\left/
 {\vphantom {1 n}} \right.
 \kern-\nulldelimiterspace} n}}  - \sum\limits_{{p_i} \le 1/n} {{p_i}}  - \sum\limits_{{p_i} > 1/n} {{p_i}}  + \sum\limits_{{p_i} > 1/n} {{1 \mathord{\left/
 {\vphantom {1 n}} \right.
 \kern-\nulldelimiterspace} n}} } \right)L({1 \mathord{\left/
 {\vphantom {1 n}} \right.
 \kern-\nulldelimiterspace} n}) = 0  \tag{\theequation b}\label{G and B Center1 b}
\end{flalign}
where  \eqref{G and B Center1 a} follows from the monotonicity of $L(\cdot)$. The equality holds if and only if $p_i={1}/{n},i=1,2,...,n$.
\end{proof}

\subsection{Event Decomposition and Merging}
Let ${p_i}^{(1)} + {p_i}^{(2)} = {p_i}$, which are the probabilities of the first and second sub-events of the i-th event,  we have
\begin{equation}\label{equ:event decom and merge}
{p_i}^{(1)}{e^{{{1 - {p_i}^{(1)}} \over {{p_i}^{(1)}}}}} + {p_i}^{(2)}{e^{{{1 - {p_i}^{(2)}} \over {{p_i}^{(2)}}}}} \ge {p_i}{e^{{{1 - {p_i}} \over {{p_i}}}}}.
\end{equation}
This shows that NMIM will be increased when one event is divided into two sub-events. On the other hand, if two events are merged into one event, NMIM will be decreased. This is to say, more observation knowledge leads to the increasing of event predication accuracy or decreasing of event unexpected degree.
\begin{proof}
\begin{equation}
\begin{split}
  &{p_i}^{(1)}{e^{{{1 - {p_i}^{(1)}} \over {{p_i}^{(1)}}}}} + {p_i}^{(2)}{e^{{{1 - {p_i}^{(2)}} \over {{p_i}^{(2)}}}}} \ge {p_i}^{(1)}{e^{{{1 - {p_i}} \over {{p_i}}}}} + {p_i}^{(2)}{e^{{{1 - {p_i}} \over {{p_i}}}}} = \left( {{p_i}^{(1)} + {p_i}^{(2)}} \right){e^{{{1 - {p_i}} \over {{p_i}}}}} = {p_i}{e^{{{1 - {p_i}} \over {{p_i}}}}} .\cr
\end{split}
\end{equation}
\end{proof}


\subsection{Indepent Probability Distributions}
For two given probability distribution ${\textbf{{p}}}=(p_1,p_2,...,p_n)$ and ${\textbf{{q}}}=(q_1,q_2,...,q_n)$ of finite alphabet without zero elements (each element $0<p_i\le1$, $0<q_i\le1$), we observe that
\begin{equation}
\mathscr{L}({\textbf{\emph{p}}}) + \mathscr{L}({\textbf{\emph{q}}}) \le\mathscr{L}({\textbf{\emph{pq}}})
\end{equation}
\begin{proof}
\begin{flalign}\label{equ:Indepent Probability Distributions} 
  & \mathscr{L}({\textbf{\emph{p}}}) + \mathscr{L}({\textbf{\emph{q}}}) = \log \sum\limits_{i = 1}^n {{p_i}{e^{{1-p_i \over {{p_i}}}}}}  + \log \sum\limits_{j = 1}^n {{q_j}{e^{{1-q_j \over {{q_j}}}}}}     = \log \sum\limits_{i = 1}^n {{p_i}{e^{{1-p_i \over {{p_i}}}}}} \sum\limits_{j = 1}^n {{q_j}{e^{{1-q_j \over {{q_j}}}}}} \\ 
 & = \log \sum\limits_{i,j} {{p_i}{q_j}{e^{{1-p_i \over {{p_i}}} + {1-q_j \over {{q_i}}}}}}   = \log \sum\limits_{i,j} {{p_i}{q_j}{e^{{{{p_i} +  {q_j}-2 p_i q_j} \over {{p_i}{q_j}}}}}} \overset{(*)}\le \log \sum\limits_{i,j} {{p_i}{q_j}{e^{{{1 -p_i q_j} \over {{p_i}{q_j}}}}}}  = \mathscr{L}({\textbf{\emph{pq}}}). \tag{\theequation a}\label{equ:Indepent Probability Distributions}
\end{flalign}
The inequation (*) can be verified by $p_i+q_j-p_i q_j \le 1$ since $1 - {p_i} - {q_j} + {p_i}{q_j} = \left( {1 - {p_i}} \right)\left( {1 - {q_j}} \right) \ge 0$.
\end{proof}

\subsection{Taylor Series Expansion}
Obviously, the $n$th order derivative of $L(p_i)$ is:
\begin{equation}\label{equ:nth L_de}
L^{(n)}\left( {{p_i}} \right) = {(-1)}^{n+1}{(n-1)!} { 1\over {{p_i}}^{n}}  +{(-1)}^{n} {n!}{ 1\over {{p_i}}^{n+1}}.
\end{equation}
Taylor series expansion of $L(p_i)$ is given by
\begin{equation}\label{equ:taylor series expansion}
\begin{split}
L\left( {{p_i} + \Delta p} \right) = L\left( {{p_i}} \right) + \sum_{k=1}^n {{( - 1)^k}{{k - {p_i}} \over {kp_i^{k + 1}}}\Delta {p^k}}  + o(\Delta {p^n}).
\end{split}
\end{equation}
In particular, if $p_i \ll 1 <n$, then it can be written as
\begin{equation}
\begin{split}
L\left( {{p_i} + \Delta p} \right) = L\left( {{p_i}} \right) +\sum_{k=1}^n {{{(-1)}^{k}\over p_i^{k+1}} \Delta p^k}+ o(\Delta {p^n}).
\end{split}
\end{equation}


\subsection{Minimum probability}
\begin{lem}\label{lem:root}
If $0 < {p_1} < {p_2} < {1 \over n}$ $(n\ge2)$ and $n$ is large enough, such as $n \gg 1$, then we have
\begin{equation}
    {{M({p_1})} \over {M({p_2})}}   \ge n - 1.
\end{equation}
\end{lem}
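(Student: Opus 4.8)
The plan is to work directly with the closed form of the ratio. By Definition \ref{MIM of Event}, $M(p)=p\,e^{(1-p)/p}$, so
\begin{equation}
\frac{M(p_1)}{M(p_2)}=\frac{p_1}{p_2}\,e^{\frac{1}{p_1}-\frac{1}{p_2}} .
\end{equation}
Taking logarithms, it is natural to study $g(p):=\log M(p)=\log p+\frac1p-1$, so that $\log\frac{M(p_1)}{M(p_2)}=g(p_1)-g(p_2)$. The monotonicity established in Section \ref{sec:monotonicity} already gives $g(p_1)>g(p_2)$, i.e. the ratio exceeds $1$; the real task is to sharpen this crude bound into the factor $n-1$.

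First I would quantify the rate at which $g$ decreases. Since $g'(p)=\frac1p-\frac1{p^2}=\frac{p-1}{p^2}<0$ on $(0,1)$, and a short computation shows that $-g'(p)=\frac{1-p}{p^2}$ is itself decreasing in $p$ for $p<2$, the mean value theorem yields, for some $\xi\in(p_1,p_2)\subset(0,1/n)$,
\begin{equation}
\log\frac{M(p_1)}{M(p_2)}=g(p_1)-g(p_2)=-g'(\xi)\,(p_2-p_1)=\frac{1-\xi}{\xi^2}\,(p_2-p_1).
\end{equation}
Because $\frac{1-p}{p^2}$ decreases in $p$ and $\xi<1/n$, I can lower bound $\frac{1-\xi}{\xi^2}\ge \frac{1-1/n}{(1/n)^2}=n(n-1)$, which gives the clean intermediate estimate
\begin{equation}
\frac{M(p_1)}{M(p_2)}\ge e^{\,n(n-1)(p_2-p_1)} .
\end{equation}
This already exhibits the exponential domination that the exponential form of NMIM is designed to produce, and it is the step I would build the argument around.

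The final step is to convert this exponential estimate into the claimed $n-1$: it suffices to show $n(n-1)(p_2-p_1)\ge\log(n-1)$, i.e. a separation $p_2-p_1\ge \frac{\log(n-1)}{n(n-1)}$ between the two smallest masses. I expect this to be the main obstacle, since the bare hypotheses $0<p_1<p_2<1/n$ do not by themselves force any such gap (letting $p_2-p_1\to0$ drives the ratio toward $1$). The proof must therefore exploit the regime ``$n$ large enough'' together with the sparsity/diversity structure of the underlying distribution, under which two distinct masses lying below $1/n$ are separated on the scale $\frac{\log n}{n^2}$. I would make that separation explicit — either as a standing assumption inherited from the minority-subset setting, or by arguing that for $n\gg1$ any admissible pair $p_1<p_2<1/n$ in the intended application satisfies it — and then substitute it into the exponential estimate above to close the proof.
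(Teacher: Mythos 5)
Your proposal stalls exactly where you admit it does, and the gap is real: the inequality $n(n-1)(p_2-p_1)\ge\log(n-1)$ neither follows from the hypotheses nor is assumed anywhere in the paper, and the additive-separation hypothesis you propose to import ($p_2-p_1\gtrsim \log n/n^2$) is not the paper's resolution. The paper's proof (Appendix \ref{Appendices A}) instead substitutes $x=p_1/p_2\in(0,1)$ and uses only $1/p_1>n$ to get the uniform bound
\begin{equation}
\frac{M(p_1)}{M(p_2)}=x\,e^{(1-x)/p_1}\ge x\,e^{n(1-x)},
\end{equation}
which is exponentially large in $n$ for each fixed ratio $x$; it then shows $x e^{n(1-x)}\ge n-1$ for all $x$ in an interval $\bigl[e^{\ln(n-1)-n},\,1-\tfrac{\ln(n-1)}{n-1}\bigr]$ that exhausts $(0,1)$ as $n\to\infty$. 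So the correct reading of ``$n$ large enough'' (made explicit in Remark \ref{rem:big n}) is that the threshold $N$ depends on the ratio $p_1/p_2$, not on any additive gap; no separation assumption is needed, and your observation that $p_2-p_1\to0$ drives the ratio to $1$ is precisely the statement that $N$ blows up as $x\to1$ --- a looseness in the lemma's phrasing that the paper tolerates via Remark \ref{rem:big n}, not an extra hypothesis to be added.

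The reason your mean-value estimate cannot be closed is that you bounded $-g'(\xi)=\frac{1-\xi}{\xi^2}$ by its worst value over $(0,1/n)$, namely $n(n-1)$, discarding the fact that $\xi<p_2$ and that $p_2$ may be far smaller than $1/n$: for $p_2=1/n^2$ and $p_1=p_2/2$ your intermediate bound gives only $e^{n(n-1)p_2/2}\approx e^{1/2}$, although the true ratio is $\tfrac12 e^{n^2}$. The fix stays entirely inside your framework: use $\xi<p_2$, so $-g'(\xi)\ge\frac{1-p_2}{p_2^2}$, hence
\begin{equation}
\frac{M(p_1)}{M(p_2)}\ge \exp\Bigl(\tfrac{(1-p_2)(p_2-p_1)}{p_2^2}\Bigr)=\exp\Bigl((1-p_2)\,\tfrac{1-x}{p_2}\Bigr)\ge e^{(n-1)(1-x)},
\end{equation}
where $x=p_1/p_2$ and the last step uses $p_2<1/n$; since $e^{(n-1)(1-x)}\ge n-1$ for all $n$ large enough depending only on $x$, this completes the proof in the same pointwise-in-$x$ sense as the paper's. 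With that single change your route becomes a complete --- and arguably cleaner --- alternative to the paper's root-finding argument; as written, however, the final step is missing and the proposed repair alters the statement rather than proving it.
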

\begin{proof}
Refer to the Appendix \ref{Appendices A}.
\end{proof}

\begin{rem}\label{rem:big n}
Lemma \ref{lem:root} implies that no matter how close or far the two probabilities $p_1$ and $p_2$ are, there always exists a $N$, when $n>N$, we have ${{M({p_1})} \over {M({p_2})}} \ge n-1$. 
\end{rem}

\begin{lem}\label{lem:root1}
For $0 < {p_1} < {p_2} < {1 \over n}$ $(n\ge2)$, if ${p_1} \ll {1 \over {1 + \ln (n - 1)}}$,
then ${{M({p_1})} \over {M({p_2})}} \ge n-1.$
\end{lem}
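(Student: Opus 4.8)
The plan is to take logarithms and reduce the claim to one scalar inequality. With $L(p)=\ln M(p)=\ln p+\frac1p-1$ as in the text, the assertion $\frac{M(p_1)}{M(p_2)}\ge n-1$ is equivalent to
\begin{equation}
L(p_1)-L(p_2)=\left(\frac{1}{p_1}-\frac{1}{p_2}\right)+\ln\frac{p_1}{p_2}\ge\ln(n-1).
\end{equation}
By the monotonicity established in Subsection \ref{sec:monotonicity}, $L$ is strictly decreasing on $(0,1)$, so with $p_1<p_2$ the left-hand side is already positive; what remains is to show the hypothesis on $p_1$ lifts it above $\ln(n-1)$.

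First I would recast the hypothesis into a usable form: $p_1\ll\frac{1}{1+\ln(n-1)}$ is exactly $\frac{1}{p_1}-1\gg\ln(n-1)$, i.e. $\frac1{p_1}$ dwarfs $1+\ln(n-1)$. The governing term is $\frac{1}{p_1}-\frac{1}{p_2}=\frac{p_2-p_1}{p_1p_2}$, and I would treat the logarithmic piece $\ln\frac{p_1}{p_2}$ as a lower-order correction. Writing $r=p_2/p_1>1$, the gap becomes $\frac{1}{p_1}\left(1-\frac1r\right)-\ln r$; whenever $r$ is bounded away from $1$ the prefactor $\frac1{p_1}$, made enormous by the hypothesis, forces the gap past $\ln(n-1)$, since $1-\frac1r$ is then a fixed positive fraction while $\ln r$ is of lower order than $\frac1{p_1}$.

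The step I expect to be the real obstacle is the multiplicatively close regime $r=p_2/p_1\to1^{+}$. There the gap $L(p_1)-L(p_2)$ degenerates to $0$, so the bound cannot come from the size of $\frac1{p_1}$ alone but must exploit the amplification $\frac{1}{p_1p_2}>\frac{n}{p_1}$ of the factor multiplying $p_2-p_1$. Concretely, I would show that $\frac{p_2-p_1}{p_1p_2}$ already exceeds $\ln(n-1)+\ln\frac{p_2}{p_1}$ once the spacing $p_2-p_1$ clears a threshold of order $p_1^{2}\ln(n-1)$, which the hypothesis keeps tiny, and otherwise fall back on the explicit $N(p_1,p_2)$ of Remark \ref{rem:big n} and the threshold argument behind Lemma \ref{lem:root} (Appendix \ref{Appendices A}). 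The only delicate bookkeeping is verifying that the negative correction $\ln\frac{p_1}{p_2}$ never cancels the dominant term, and that is where I would concentrate the estimates.
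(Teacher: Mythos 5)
Your logarithmic reduction and your treatment of the regime where $r=p_2/p_1$ stays bounded away from $1$ are sound, and they essentially coincide with the paper's own argument: Appendix \ref{Appendices B} writes the ratio as $x e^{(1-x)/p_1}$ with $x=p_1/p_2=1/r$ (exactly the exponential of your gap $L(p_1)-L(p_2)$), exploits its unimodality in $x$, and shows it exceeds $n-1$ on an interval $[x_1,x_2]$ with $x_1\approx e^{\ln(n-1)-1/p_1}$ and $x_2\approx 1-\frac{p_1}{1-p_1}\ln(n-1)$. Your spacing threshold $p_2-p_1\gtrsim p_1^{2}\ln(n-1)$ is the same object as the paper's right endpoint, since $x\le x_2$ amounts to $p_2-p_1\ge\frac{p_1p_2}{1-p_1}\ln(n-1)$.

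The genuine gap is your fallback for the sub-threshold regime. When $p_2-p_1$ lies below that threshold you propose to invoke Remark \ref{rem:big n} and the argument behind Lemma \ref{lem:root}; but those require $n\gg 1$, and the $N$ they produce depends on how close $p_1$ and $p_2$ are, blowing up as $p_2\to p_1$, whereas in Lemma \ref{lem:root1} the value of $n$ is fixed and may be as small as $2$ or $3$. Worse, no argument can close this case, because the conclusion is simply false there: writing $p_2=p_1(1+\epsilon)$,
\begin{equation}
\frac{M(p_1)}{M(p_2)}=\frac{1}{1+\epsilon}\exp\!\left(\frac{\epsilon}{p_1(1+\epsilon)}\right)\longrightarrow 1\qquad(\epsilon\to 0^{+}),
\end{equation}
so for instance $n=3$, $p_1=10^{-6}$, $p_2=p_1+p_1^{3}$ satisfies every hypothesis, yet the ratio is about $1+10^{-6}<n-1$. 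The lemma is therefore true only in the asymptotic sense in which the paper proves and uses it: the admissible interval $[x_1,x_2]$ of ratios tends to $(0,1)$ as $p_1\to 0$, i.e., for each \emph{fixed} ratio $x<1$ there is a threshold $p_0$ (depending on $x$ and $n$) such that the conclusion holds for all $p_1<p_0$ — this is precisely the remark following the proof in Appendix \ref{Appendices B}. Your plan, which aims to cover all pairs $p_1<p_2<1/n$ simultaneously, cannot be completed; the fix is to drop the sub-threshold case and state the conclusion with the paper's quantifier structure rather than to look for a stronger estimate there.
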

\begin{proof}
Refer to the Appendix \ref{Appendices B}.
\end{proof}

\begin{rem}\label{rem:min}
 Lemma \ref{lem:root1} means that there exists a $p_0$, and if $p_1<p_0$, ${{M({p_1})} \over {M({p_2})}} \ge n-1$ is always true for $n \ge 2$.
\end{rem}

\begin{prop}\label{thm:summax}
For a given non-zero probability distribution ${\textbf{{p}}}=(p_1,p_2,...,p_n)$, let ${p_{{\min}}}$ and ${p_{{s\min}}}$ be the minimum and the second minimum values in the distribution ${\textbf{{p}}}$ ($p_{min}<p_{s\min}$), respectively. If it satisfies either of the following conditions:

i) $ p_{min} \ll {1 \over {1 + \ln (n - 1)}}$;

ii) The events size $n$ is large enough, such as $n \gg1$;\\
then we have
 \begin{equation}\label{equ:summax}
0 \le \mathscr{L}({\textbf{{p}}}) - L(p_{\min}) \le \log 2.
\end{equation}

\end{prop}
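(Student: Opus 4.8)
The plan is to write the quantity of interest as a single logarithm of a ratio and then squeeze that ratio between $1$ and $2$. Because $\mathscr{L}(\textbf{p}) = \log\sum_{i=1}^n M(p_i)$ and $L(p_{\min}) = \log M(p_{\min})$, we have
\begin{equation}
\mathscr{L}(\textbf{p}) - L(p_{\min}) = \log\frac{\sum_{i=1}^n M(p_i)}{M(p_{\min})},
\end{equation}
so the assertion $0 \le \mathscr{L}(\textbf{p}) - L(p_{\min}) \le \log 2$ is equivalent to $1 \le \sum_{i=1}^n M(p_i)/M(p_{\min}) \le 2$. The left inequality is immediate: $M(p_{\min})$ is one of the summands and every $M(p_i) > 0$, so the numerator is at least $M(p_{\min})$ and the ratio is at least $1$.

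For the upper bound I would use monotonicity to reduce everything to the two smallest probabilities. By the monotonicity property (Section \ref{sec:monotonicity}), $M(\cdot)$ is strictly decreasing on $(0,1]$, so $M(p_{\min})$ is the largest importance and each of the $n-1$ events whose probability exceeds the minimum satisfies $M(p_i) \le M(p_{s\min})$. Summing these $n-1$ terms gives
\begin{equation}
\sum_{i=1}^n M(p_i) \le M(p_{\min}) + (n-1)\,M(p_{s\min}).
\end{equation}
Now I would invoke the minimum-probability lemmas with $p_1 = p_{\min}$ and $p_2 = p_{s\min}$: under condition ii) ($n \gg 1$) Lemma \ref{lem:root} applies, and under condition i) ($p_{\min} \ll 1/(1+\ln(n-1))$) Lemma \ref{lem:root1} applies, each yielding $M(p_{\min})/M(p_{s\min}) \ge n-1$, i.e. $(n-1)M(p_{s\min}) \le M(p_{\min})$. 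Substituting this into the previous display gives $\sum_{i=1}^n M(p_i) \le 2M(p_{\min})$, so the ratio is at most $2$ and $\mathscr{L}(\textbf{p}) - L(p_{\min}) \le \log 2$.

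The crux of the argument, and the only genuine work, is the sharp pairwise bound $M(p_{\min})/M(p_{s\min}) \ge n-1$, which is exactly what Lemmas \ref{lem:root} and \ref{lem:root1} supply; everything else is monotonicity plus counting the $n-1$ non-minimal terms. The one point I would check carefully is that these lemmas are stated under the extra hypothesis $p_{s\min} < 1/n$, which need not hold for a second-smallest probability. If $p_{s\min} \ge 1/n$ I would argue separately that, since $M$ is decreasing, the ratio $M(p_{\min})/M(p_{s\min})$ only grows as $p_{s\min}$ increases; hence the bound $\ge n-1$ established on the range $p_{s\min} < 1/n$ persists a fortiori (using that $p_{\min} < 1/n$ for any non-uniform distribution with a unique minimum, so that range is nonempty), and the same computation closes the proof.
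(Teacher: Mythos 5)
Your proof follows essentially the same route as the paper's: isolate the minimum term, bound each of the remaining $n-1$ importances by $M(p_{s\min})$ via monotonicity, and invoke Lemma \ref{lem:root} (under condition ii) or Lemma \ref{lem:root1} (under condition i) to obtain $(n-1)M(p_{s\min}) \le M(p_{\min})$, so that the sum is at most $2M(p_{\min})$. You are in fact slightly more careful than the paper, which applies the lemmas with $p_2 = p_{s\min}$ without checking their hypothesis $p_{s\min} < 1/n$; your a-fortiori argument through an intermediate point $q \in (p_{\min}, 1/n)$ (a nonempty interval, since $p_{\min} < p_{s\min}$ forces $p_{\min} < 1/n$) correctly closes that gap.
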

\begin{proof}
It is noted that
\begin{flalign}\label{equ:mini pro} 
\mathscr{L}({\textbf{\emph{p}}})   &   = \log \left( {{p_{\min }}{e^{{1-p_{\min} \over {{p_{\min }}}}}} + \sum\limits_{{p_i} \ne {p_{\min }}} {{p_i}{e^{{1-p_i \over {{p_i}}}}}} } \right) \\
&\le \log \left( {{p_{\min }}{e^{{1 -p_{\min}\over {{p_{\min }}}}}} + \sum\limits_{{p_i} \ne {p_{\min }}} {{p_{s\min}}{e^{{1- p_{s\min}\over {{p_{s\min}}}}}}} } \right) \tag{\theequation a}\label{equ:mini proa}\\
&= \log \left( {{p_{\min }}{e^{{1-p_{\min} \over {{p_{\min }}}}}} + (n - 1){p_{s\min }}{e^{{1 -p_{s\min}\over {{p_{s\min }}}}}}} \right)   \tag{\theequation b}\label{equ:mini prob}\\
&\le \log \left( {2{p_{\min }}{e^{{1-p_{\min} \over {{p_{\min }}}}}}} \right)   = L(p_{\min}) + \log 2.  \tag{\theequation c}\label{equ:mini proc}
\end{flalign}
(\ref{equ:mini proa}) follows from the monotonicity. (\ref{equ:mini proc}) follows from Lemma \ref{lem:root} and Lemma \ref{lem:root1}.

Furthermore, we can also obtain the lower bound of $\mathscr{L}({\textbf{\emph{p}}})$ according to the monotonicity of logarithmic function,
\begin{equation}
\mathscr{L}({\textbf{\emph{p}}}) \ge  \log \left( {{p_{\min }}{e^{{1-p_{\min} \over {{p_{\min }}}}}}} \right)=L(p_{\min}).
\end{equation}

Based on the discussions above, we get
\begin{equation}
0 \le \mathscr{L}({\textbf{{p}}}) - L(p_{\min}) \le \log 2.
\end{equation}
\end{proof}
\begin{rem}\label{rem:summax describtion}
Proposition \ref{thm:summax} deduces that the gap between $\mathscr{L}({\textbf{\emph{p}}})$ and $L(p_{\min})$ is less than a constant. In some applications, $p_{\min}\ll1$, so $L(p_{\min}) \gg \log 2$. In this case, one can use $\mathscr{L}({\textbf{\emph{p}}}) \approx L(p_{\min})$ instead.
\end{rem}
\begin{rem}\label{rem:big data key characters}
In Proposition \ref{thm:summax}, it clearly shows the two key characters of big data. Condition $i)$ reflects the rare event finding. Condition $ii)$ reflects the large diversities of events.  This illustrates the validity of definition of NMIM directly in the era of big data.
\end{rem}

\begin{rem}\label{rem:summax applications}
Proposition \ref{thm:summax} explains the sense of NMIM, that the rare events own the majority of information in the viewpoint of message importance, such as seismic record. In this case, we only need to save some small probability events without losing too much information. As a result, it is expected to record large amounts of information with less storage space. Actually, the condition of Proposition \ref{thm:summax} is easy to meet in the scenario of big data.
\end{rem}

In the era of big data, a large amount of data needs to be stored and transferred. Unfortunately, the storage space of the devices and actual channel capacity are always limited. Therefore, it is quite hard to store or transfer all the messages. Based on the idea of dimensionality reduction, we expect to process a small portion of the data which can retain most importance information. In fact, NMIM is exactly the importance measure to do this, because this importance measure magnifies this minority data. The remaining parts of this paper cover specific strategies for solving the storage and transmission problem in big data by means of NMIM.

\section{Compressed Storage by NMIM}
In this section, we discuss the compressed storage problem with NMIM for big data. Consider the following case, where a security camera takes a lot of photographs from time to time. At present, it is easy for us to analyze the images to get the information which we need. For example, a security camera at the door of one supermarket can record whether someone comes in. It will help us to find the thief if burglary happens. However, storing so much data day by day is too hard, and thus it is necessary to compress this image data.

Consider the model shown in Fig.\ref{fig:code model}, in which there are some events and the event with lower probability owns more importance. In order to compress the data for storage, part of the information will be lost in the process of compression, but the degree of loss can be controlled. In this paper, we consider the rule that minimizes the loss of useful information in compressed encoding. Suppose that there is no error in transmitting and the precision of information reconstruction is only related to the coding length in this model. Obviously, the reconstruction error is monotonically decreasing with respect to code length since more detail can be recoded with longer code length. However,  how to measure the loss of information becomes the important question. In fact, the error of important information brings great loss while the common events' loss is insignificant. For example, if the image which contains the suspicious individuals is not clear, we will not able to catch the suspects. However, it will have little influence if the image which shows the usual situation is not clear. In order to describe such kind of things, NMIM is used as the weighting factor. This rule is utilized to minimization of NMIM importance loss. To do so, this optimization problem can be formulated as
\begin{equation}\label{pro:problem1}
\mathop {\min }\limits_{{l_i}} \left\{ {\sum\limits_{i=1}^n {{p_i}{e^{{{1 - {p_i}} \over {{p_i}}}}}{D_f}({l_i})} } \right\},
\end{equation}
where $l_i$ is the encoding length and ${D_f}({l_i}) $ is the reconstruction error function. The source event number is $n$. To solve this optimization problem, the resource condition need to be taken into account. For convenience, we consider the total code length of all the events $K = \sum\limits_i {{l_i}} $ in this paper. We also suppose that the initial code length of every events is assigned as $L$. As a result, the average code length before coding is $L$.

\begin{figure*}
  \centering
  \includegraphics[width=1\textwidth]{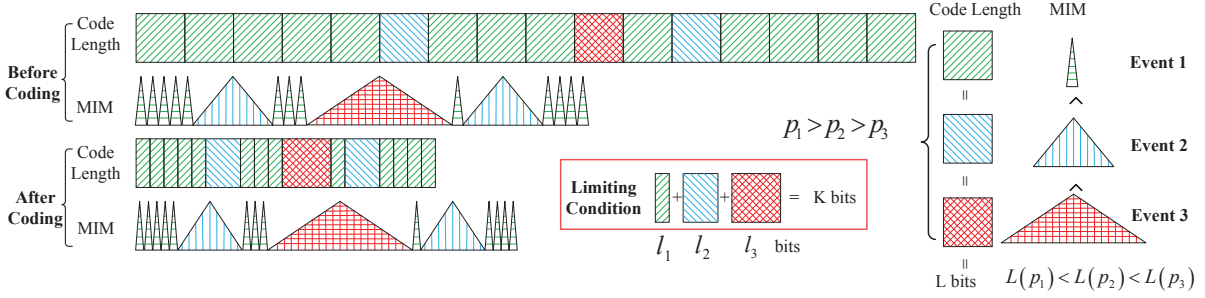} \\
  \caption{The MIM coding model.}\label{fig:code model}
\end{figure*}

\subsection{Reciprocal Error Model}
In this section, the reciprocal error model is adopted, in which we minimize the expected values of NMIM with limited average encoding length. Problem in (\ref{pro:problem1}) can be represented by
\begin{flalign}\label{equ:re code model}
\mathcal{P}_1: \,\, \mathop {\min }\limits_{{l_i}} \,\,\, &\sum\limits_{i=1}^n {{p_i}{{e^{{1-p_i} \over {p_i}}} {{l_i}^{-1}}}}  \\
\textrm{s.t.}\,\,\,& K = \sum\limits_{i=1}^n {{l_i}} \tag{\theequation a}\label{equ:re code model a}\\
&\text{if}~l_i\le L, \quad i=1,2,...,n   \tag{\theequation b}\label{equ:re code model b}
\end{flalign}

\begin{prop}\label{thm:code1}
For a given source event number $n$ and the sum of every coding length K ($K \le L$), the solution to the problem in (\ref{equ:re code model}) is given by
 \begin{equation}\label{equ: prop code1}
  {l_i} = {\left\lceil {{{\sqrt {M({p_i})} } \over {\sum\limits_{i=1}^n { {\sqrt {M({p_i})} }  } }}K} \right\rceil  }.
\end{equation}
${\left\lceil x\right\rceil}$ is the smallest integer larger than or equal to $x$.
\end{prop}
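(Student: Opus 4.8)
The plan is to treat $\mathcal{P}_1$ as a convex program in the continuous variables $l_i$, solve it exactly with a single Lagrange multiplier, and then recover the integer code lengths by rounding. First I would rewrite the objective using the event-importance notation $M(p_i)=p_i e^{(1-p_i)/p_i}$ of Definition \ref{MIM of Event}, so that the cost in (\ref{equ:re code model}) becomes $\sum_{i=1}^n M(p_i)/l_i$. Since each summand $M(p_i)/l_i$ is convex in $l_i>0$ (with $M(p_i)>0$) and the budget constraint $\sum_i l_i = K$ is affine, the relaxed problem is convex; hence any point satisfying the stationarity (KKT) conditions is the global minimizer, and I do not need to worry about spurious stationary points.

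Next I would form the Lagrangian
\begin{equation}
\mathcal{J}(\{l_i\},\lambda)=\sum_{i=1}^n \frac{M(p_i)}{l_i}+\lambda\Big(\sum_{i=1}^n l_i - K\Big),
\end{equation}
and set $\partial \mathcal{J}/\partial l_i = -M(p_i)/l_i^{2} + \lambda = 0$, which gives $l_i = \sqrt{M(p_i)/\lambda}=\sqrt{M(p_i)}/\sqrt{\lambda}$ for each $i$. Substituting this expression into the budget constraint $\sum_i l_i = K$ pins down the multiplier via $1/\sqrt{\lambda}=K/\sum_{j=1}^n\sqrt{M(p_j)}$, and back-substitution yields the continuous optimum $l_i = K\,\sqrt{M(p_i)}/\sum_{j=1}^n\sqrt{M(p_j)}$, which is exactly the argument inside the ceiling in (\ref{equ: prop code1}). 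Positivity of each $l_i$ is automatic, so the only relaxed inequality that could bind is the upper bound in (\ref{equ:re code model b}).

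The final step is to enforce integrality of the code lengths: because each cost term $M(p_i)/l_i$ is monotonically decreasing in $l_i$, rounding the continuous optimum up to $\lceil\,\cdot\,\rceil$ produces a feasible, near-optimal integer assignment, which is precisely the stated formula (\ref{equ: prop code1}). I would present this rounding as a quantization step consistent with the convex solution rather than as a separate integer re-optimization.

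The hard part will be the integrality together with the upper-bound constraint (\ref{equ:re code model b}): the clean closed form is the optimum of the continuous relaxation, so strictly speaking one should argue that the ceiling rounding perturbs both the cost and the total length $\sum_i l_i$ by only a controlled amount (at most one unit per event), and that the resulting lengths still respect the budget $K$ and the cap. I expect the argument to carry out the Lagrangian computation in full and then treat the ceiling as a practical quantization, so the main subtlety to flag is that optimality is established for the relaxed problem, with the rounded $l_i$ being the implementable approximation.
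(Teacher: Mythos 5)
Your proposal is correct and follows essentially the same route as the paper's own proof in Appendix C: the same Lagrangian $\sum_i M(p_i)/l_i + \lambda(\sum_i l_i - K)$, the same stationarity condition $-M(p_i)/l_i^2+\lambda=0$, the same substitution into the budget constraint to obtain $l_i = K\sqrt{M(p_i)}/\sum_j\sqrt{M(p_j)}$, and the same ceiling-rounding at the end. Your additions — explicitly noting convexity of the relaxed problem (so the stationary point is the global minimizer) and flagging that the ceiling step is a quantization of the continuous optimum rather than an exact integer optimality claim — are refinements the paper itself glosses over, but they do not change the underlying argument.
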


\begin{proof}
Refer to the Appendix \ref{Appendices C}.
\end{proof}

When $p_{min} \ll 1$ or $n\gg1$, the compression ratio can be calculated by
 \begin{equation}
\begin{split}
   {{\sum\limits_{i = 1}^n {{p_i}{l_i}} } \over L} &= {{\sum\limits_{i = 1}^n {{p_i}\sqrt {M\left( {{p_i}} \right)} } } \over {\sum\limits_{i = 1}^n {\sqrt {M\left( {{p_i}} \right)} } }}{K \over L} \overset{(c)}\approx {p_{\min }}{K \over L},
  \end{split}
\end{equation}
because ${\sum\limits_{i = 1}^n {\sqrt {M\left( {{p_i}} \right)} } } \approx {\sqrt {M\left( {{p_{\min }}} \right)} }$ and ${\sum\limits_{i = 1}^n {{p_i}\sqrt {M\left( {{p_i}} \right)} } } \approx{{p_{\min }}\sqrt {M\left( {{p_{\min }}} \right)} }$ when $p_{min} \ll 1$ or $n\gg1$, which can be derived easily in the same way as Remark \ref{rem:summax describtion}. Generally speaking, the gap between $K$ and $L$ is usually small and $p_{min} \ll 1$. In this case, ${p_{\min }}{K \over L} \ll 1$, which means the data is compressed greatly.

\subsection{Exponent Error Model}
In the following, we assume the relation between encoding error and codeword lengths is exponential function. That is to say, exponent error model is used as the reconstruction error function and NMIM is taken as wight to measure the encoding loss. Problem in (\ref{pro:problem1}) can be represented by

\begin{flalign}\label{equ:ex code model}
\mathcal{P}_2: \,\,\mathop {\min }\limits_{{l_i}}\,\,\, & {\sum\limits_{i = 1}^n {{p_i}} {e^{{{1 - {p_i}} \over {{p_i}}}}}{\gamma ^{ - {l_i}}}}  \\
\textrm{s.t.}\,\,\,& K = \sum\limits_{i}^n {{l_i}} \tag{\theequation a}\label{equ:ex code model a}\\
& \text{if}~ l_i\le L, \quad i=1,2,...,n   \tag{\theequation b}\label{equ:ex code model b}
\end{flalign}

\begin{prop} \label{thm:code2}
For given source event number $n$ and the required encoding length $K$ ($K\le L$), the error probability for codeword lengths $l_i$ is ${\gamma ^{ - {l_i}}}$ where $\gamma$ is size of code alphabet. The solution to the problem $\mathcal{P}_2$ is given by
 \begin{equation}\label{equ: prop code2}
{l_i} = {\left\lceil {{{\left( {\ln M({p_i}) - \sum\limits_{i = 1}^{\tilde N} {\ln M({{\tilde p}_i})} /\tilde N + K\ln \lambda /\tilde N} \right)} \mathord{\left/
 {\vphantom {{\left( {\ln M({p_i}) - \sum\limits_{i = 1}^{\tilde N} {\ln M({{\tilde p}_i})} /\tilde N + K\ln \lambda /\tilde N} \right)} {\ln \lambda }}} \right.
 \kern-\nulldelimiterspace} {\ln \gamma }}} \right\rceil ^ + }
\end{equation}
where ${\tilde N}$ is the positive number of all the $l_i,i=1,2,...,n$ and $\{\tilde p_i, i=1,2,...,\tilde N\}$ is part of the commutative sequence of $\{ p_i, i=1,2,...,N\}$ in increment order which satisfies ${{\ln M\left( {{p_i}} \right) + \ln \ln \gamma  - \ln \left( { - \lambda } \right)} }>0$. ${\left\lceil x\right\rceil}$ is the smallest integer larger than or equal to $x$ and
\begin{equation}
\begin{split}
 (x)^+=\left\{
   \begin{aligned}
 & x, \quad x>0 \\
 & 0,\quad x\leq0 \\
   \end{aligned}
   \right.
   \end{split}
\end{equation}
\end{prop}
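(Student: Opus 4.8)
The plan is to solve the relaxed continuous version of $\mathcal{P}_2$ by Lagrangian duality and then reinstate the integrality and non-negativity constraints at the very end. First I would observe that, by Definition \ref{MIM of Event}, the coefficient $p_i e^{(1-p_i)/p_i}$ is exactly $M(p_i)$, so the objective in (\ref{equ:ex code model}) is just $\sum_{i=1}^n M(p_i)\gamma^{-l_i}$. Since each summand $M(p_i)\gamma^{-l_i}=M(p_i)e^{-l_i\ln\gamma}$ is convex in $l_i$ (with $\gamma>1$), the objective is convex and the feasible set cut out by the linear equality $\sum_i l_i=K$ together with $0\le l_i\le L$ is convex; hence the KKT conditions are necessary and sufficient and any stationary point is globally optimal. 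I would also note at the outset that the upper bound $l_i\le L$ is never active: because $K\le L$ and the lengths are non-negative, each $l_i\le\sum_j l_j=K\le L$, so that constraint may be dropped.

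Next I would form the Lagrangian $\mathcal{L}=\sum_i M(p_i)\gamma^{-l_i}+\lambda(K-\sum_i l_i)$ for the budget constraint and retain the non-negativity constraints through KKT. The stationarity condition $\partial\mathcal{L}/\partial l_i=-M(p_i)\gamma^{-l_i}\ln\gamma-\lambda=0$ yields, for every event carrying positive length, $\gamma^{-l_i}=-\lambda/(M(p_i)\ln\gamma)$, i.e. $l_i=[\ln M(p_i)+\ln\ln\gamma-\ln(-\lambda)]/\ln\gamma$ (so the multiplier must satisfy $\lambda<0$, consistent with the $\ln(-\lambda)$ appearing in the statement). An event is assigned positive length precisely when this expression is positive, i.e. when $\ln M(p_i)+\ln\ln\gamma-\ln(-\lambda)>0$; by complementary slackness the remaining events receive $l_i=0$, which is exactly what the $(\cdot)^+$ operator records.

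I would then pin down the active set and the multiplier simultaneously. By the monotonicity established in Section \ref{sec:monotonicity}, $M(p_i)$ is strictly decreasing in $p_i$, so the positivity condition $\ln M(p_i)>\ln(-\lambda)-\ln\ln\gamma$ is an upper threshold on $p_i$; the active events are therefore exactly those with the smallest probabilities. Sorting the probabilities in increasing order as $\{\tilde p_i\}$ and letting $\tilde N$ be the number that clear the threshold, I impose the budget on the active events, $\sum_{i=1}^{\tilde N} l_i=K$, and solve for the common multiplier term: summing the stationarity expression gives $\ln\ln\gamma-\ln(-\lambda)=[K\ln\gamma-\sum_{i=1}^{\tilde N}\ln M(\tilde p_i)]/\tilde N$. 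Substituting this back into the expression for $l_i$ produces the closed form in (\ref{equ: prop code2}), and applying $\lceil\cdot\rceil$ together with $(\cdot)^+$ restores integer, non-negative codeword lengths.

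The main obstacle is the self-referential determination of the active set $\tilde N$: the multiplier $\lambda$ depends on which events are active, while the active set is itself defined through $\lambda$. This is the familiar reverse water-filling fixed point. I expect to resolve it using the monotonicity of $M(\cdot)$, which guarantees the active set is a contiguous block of the most important (smallest-probability) events, so one may sort the events once, hypothesize a cutoff $\tilde N$, compute $\lambda$ from the budget equation, and verify that precisely the first $\tilde N$ sorted events satisfy the positivity condition, adjusting $\tilde N$ monotonically otherwise. Some additional care is needed to confirm that rounding up with $\lceil\cdot\rceil$ preserves feasibility and near-optimality, since the ceiling can push $\sum_i l_i$ slightly above $K$.
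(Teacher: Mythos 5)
Your proposal follows essentially the same route as the paper's Appendix D proof: form the Lagrangian for $\mathcal{P}_2$, set the stationarity condition $-M(p_i)\gamma^{-l_i}\ln\gamma-\lambda=0$ to obtain $l_i=\left[\ln M(p_i)+\ln\ln\gamma-\ln(-\lambda)\right]/\ln\gamma$ with $\lambda<0$, enforce non-negativity via $(\cdot)^+$, substitute into the budget $\sum_i l_i=K$ to express $\ln(-\lambda)$ through $\tilde N$ and the sorted active probabilities, and finally round up with $\lceil\cdot\rceil$. Your additions --- the convexity/KKT justification, the observation that $l_i\le L$ is never active when $K\le L$, the explicit fixed-point resolution of the active set via monotonicity of $M(\cdot)$, and the caveat that ceiling can violate the budget --- are rigor the paper omits but do not constitute a different method, and they are all sound.
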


\begin{proof}
Refer to the Appendix \ref{Appendices D}.

\end{proof}

When $\tilde N=n$, the corresponding compression ratio is
\begin{flalign}\label{equ:ex compress rate}
 {{\sum\limits_{i = 1}^n {{p_i}{l_i}} } \over L} &= {{\sum\limits_{i = 1}^n {{p_i}\ln M({p_i}) - \sum\limits_{i = 1}^n {\ln M({p_i})} /n + K\ln \gamma /n} } \over {\ln \gamma L}} \le {K \over {nL}}  
\end{flalign}
because ${\sum\limits_{i = 1}^n {{(p_i)}\ln M\left( {{p_i}} \right)}  }-{\sum\limits_{i = 1}^n {\ln M\left( {{p_i}} \right)}/n  }\le0$ according to (\ref{equ:G and B Center}). In general, we select $K$ being less than or equal to $L$ and ${n}\gg1$, so ${K /({nL}}) \ll1$, which means this coding scheme does achieve significant compression of data.

In fact, $\ln(x)$ is a more smoothing operator than $\sqrt{x}$. Therefore, the encoding length is more even in exponent error model than that in reciprocal model.

These two propositions focus on the case where $K \le L$, which guarantees $l_i \le L$. In fact, when $K>L$, the above methods can still be applied with small changes to them. If $l_i>L$, let $l_i \leftarrow L$ and $K\leftarrow K-L$, and then execute the above compressed algorithm literally, iterate it until all code length are smaller than $L$. For convenience, the program of our new compressed storage strategy is summarized in Algorithm \ref{alg:code}.

\begin{algorithm}[htb]
\caption{Compressed Coding}
\label{alg:code}
\begin{algorithmic}[1]
\REQUIRE ~~\\
The probability distribution of source, $\textbf{p}=\{p_i, i=1, 2, ..., n\}$\\
The sum of each encoding length, $K$\\
The initial code length, $L$
\ENSURE ~~\\
The compressed code length, $l_i,i=1,2,...,n$
\STATE    ${l_i}^\prime  \leftarrow f(\textbf{p},K,L)$  \quad\quad\quad\quad  $\vartriangleright$ See (\ref{equ: prop code1}) or (\ref{equ: prop code2})
\STATE    Sort ${l_i}^\prime $ and find the maximum one $l_{j}^\prime$
\STATE    \textbf{if}  $l_{j}^\prime>L$
\STATE    $l_{j}=L$, $K \leftarrow K-L$, $\textbf{p} \leftarrow \{p_i, i=1, 2, ..., n, i \ne j\}$ \\
               and go back to step 1
\STATE    \textbf{else} ${l_i}\leftarrow  {l_i}^\prime $ and go to step 6
\RETURN $l_i$
\end{algorithmic}
\end{algorithm}

In Algorithm \ref{alg:code}, the code length depends on the event's importance $M(p_i)$, in which more code size is assigned to the important events and little code size is assigned to common events. Thus it is feasible to use a small part of storage while retaining most important information.

\section{Analysis of Transmission by NMIM}
We discuss the message transmission problem in terms of NMIM in this section. As a message is transmitted from one side to another, people generally prefer to see the important part of message rather than the whole message itself, and NMIM will play a key role in such kinds of applications. In this case, we focus on the maximum transmission problem with limited NMIM loss under the physical environment constants, such as channel capacity, the transmission delay, etc., which helps to improve the design of communication systems.

\subsection{The Change of NMIM}
First of all, we study the change of NMIM during the transmission of the message.
 \begin{defn}\label{Rate Distortion}
\par The change of NMIM is defined as
\begin{equation}
\psi(X,Y)   ={ \mathscr{L}({\textbf{p}_x}) -  \mathscr{L} ({\textbf{p}_{y}})} .
\end{equation}
where $\textbf{p}_x$ and $\textbf{p}_y$ are the probability distributions of random variables $X$ and $Y$.
\end{defn}
 In fact, the value of $\psi(X,Y)$ could be positive or negative. If the value of $\psi(X,Y)$ is positive, it means that there is importance loss during the transmission, such as the lose of crucial data. On the other hand, the loss of importance will be over-interpreted if it is negative, which results in waste of resources. In general, the importance loss produces more severe impairment than importance over-interpreted. In the following part, we only consider importance loss for convenience.

According to Proposition \ref{thm:summax}, if $n \gg 1$ or $p_{xm}\ll \frac{1}{1+\ln(n-1)}$, $p_{ym}\ll \frac{1}{1+\ln(n-1)}$, we have
\begin{flalign}\label{equ:low and upper bound for L}
&\psi(X,Y) \ge L(p_{xm})-L(p_{ym})-\log2 \\
&\psi(X,Y) \le L(p_{xm})-L(p_{ym})+\log2  \tag{\theequation a}\label{equ:low and upper bound for L a}
\end{flalign}
where $p_{xm}$ and $p_{ym}$ is the unique minimum probability of $\textbf{p}_x$ and $\textbf{p}_y$, respectively, and it is the loose upper and lower bounds for $\psi(X,Y)$.

Actually, in most cases, $\psi(X,Y)$ is hard to study. As a result, some special cases are taken into account here. We first select Binary Symmetric Channel (BSC) since it captures most of the characteristic of the general problem as a traceable model of channel with errors \cite{Elements}, while Bernoulli($p$) source ($0<p\le0.5$) is adopted for its simpleness and representativeness.

When a message is transmitted from one side to another, it should not change too much for an effective transmission, otherwise this transport fails. Based on this idea, it is reasonable to assume $ \varepsilon (1 - 2p)\ll p$. In most of sceneries, $p$ is not close to $0.5$, so $ \varepsilon \ll p$.
\begin{figure}
  \centering
  \includegraphics[width=0.35\textwidth]{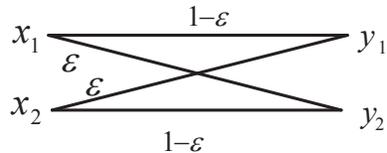} \\
  \caption{Binary symmetric channel.}\label{fig:channelBSC}
\end{figure}

\begin{prop}\label{thm: BSC}
For a given BSC and Bernoulli(p) source, if the transmission error probability $\varepsilon $ is relatively small ($\varepsilon \ll p$), the change of NMIM is
\begin{equation}
\psi(X,Y) \approx  \varepsilon s(p)
\end{equation}
where $s(p)={{(1 - p)(1 - 2p)} \over {{p^2}}} $.
\end{prop}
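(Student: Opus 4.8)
The plan is to reduce the two-symbol NMIM difference to a first-order Taylor expansion of $L(\cdot)$ evaluated at the minimum probability. First I would fix the two distributions explicitly. Since the source is Bernoulli$(p)$ with $0<p\le 1/2$, we have $\textbf{p}_x=(p,\,1-p)$; passing through a BSC with crossover $\varepsilon$ yields $\textbf{p}_y=(q,\,1-q)$ with $q=p+\varepsilon(1-2p)$. Because $\varepsilon(1-2p)\ge 0$ and $\varepsilon\ll p<1/2$ forces $q<1/2$, the values $p$ and $q$ are precisely the (unique) minimum probabilities of $\textbf{p}_x$ and $\textbf{p}_y$, which is what lets Proposition \ref{thm:summax} apply.

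Next I would invoke Proposition \ref{thm:summax} to replace each NMIM by its minimum-probability term, so that $\mathscr{L}(\textbf{p}_x)\approx L(p)$ and $\mathscr{L}(\textbf{p}_y)\approx L(q)$, giving $\psi(X,Y)\approx L(p)-L(q)$. The cleanest way to track the error is to write $\psi(X,Y)=\log\frac{M(p)+M(1-p)}{M(q)+M(1-q)}=L(p)-L(q)+\log\frac{1+M(1-p)/M(p)}{1+M(1-q)/M(q)}$, which isolates the residual. The ratios $M(1-p)/M(p)$ and $M(1-q)/M(q)$ are exponentially small (this is the dominance established in the proof of Proposition \ref{thm:summax}) and differ only by $O(\varepsilon)$, so the residual logarithm is smaller than the leading term by the factor $M(1-p)/M(p)$ and is therefore negligible.

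Then I would apply the Taylor series expansion of $L$ in (\ref{equ:taylor series expansion}) at the point $p$ with increment $\Delta p=q-p=\varepsilon(1-2p)$. Keeping only the first-order term and using the $k=1$ coefficient from (\ref{equ:nth L_de}), namely $L'(p)=\frac{1}{p}-\frac{1}{p^2}=-\frac{1-p}{p^2}$, gives $L(q)\approx L(p)-\frac{1-p}{p^2}\,\varepsilon(1-2p)$. Substituting back yields $\psi(X,Y)\approx -L'(p)\,\varepsilon(1-2p)=\varepsilon\,\frac{(1-p)(1-2p)}{p^2}=\varepsilon\,s(p)$, which is the claim.

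The main obstacle is controlling the two approximations simultaneously and honestly. The Taylor step is legitimate precisely because $\varepsilon\ll p$ forces $\Delta p\ll p$, so its remainder is $O(\varepsilon^2)$ and genuinely negligible beside $\varepsilon\,s(p)$. The minimum-probability reduction, by contrast, needs $p$ to be bounded away from $1/2$ so that $M(p)$ dominates $M(1-p)$; as $p\to 1/2$ both the dominance ratio and $s(p)$ degenerate to $0$, so the stated approximation stays consistent (both sides vanish) while its relative accuracy is controlled by how small $p$ is. I would therefore make explicit that this is a first-order statement in $\varepsilon$, valid under $\varepsilon\ll p$, with the neglected corrections being of higher order in either $\varepsilon$ or the dominance ratio $M(1-p)/M(p)$.
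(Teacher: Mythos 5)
Your proof is correct and follows essentially the same route as the paper's: a first-order Taylor expansion of $L(\cdot)$ at $p$ with increment $\Delta p=\varepsilon(1-2p)$ producing the leading term $\varepsilon s(p)$, combined with the observation that the high-probability symbol's contribution is exponentially dominated (the ratio $M(1-p)/M(p)$ is tiny for $p$ bounded away from $1/2$), so the cross term can be dropped. The only difference is bookkeeping: the paper converts the Taylor step into the multiplicative factor $e^{-\varepsilon s(p)}$ inside the logarithm and discards the correction $\bigl(e^{\frac{3p^2-3p+1}{p^2(1-p)^2}\varepsilon(1-2p)}-1\bigr)M(1-p)$, whereas you isolate the identical correction additively as $\log\frac{1+M(1-p)/M(p)}{1+M(1-q)/M(q)}$ and quantify why it is negligible — a slightly more explicit treatment of the same error term.
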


\begin{proof}
Refer to the Appendix \ref{Appendices E}.
\end{proof}

\begin{rem}\label{rem:BSC1}
This proposition means that the change of NMIM is proportional to the $\varepsilon$ and $s(p)$. $s(p)$ is a rational fraction with respect to parameter $p$.
\end{rem}

Moreover, when $p \ll 1$, $s(p)$ can be simplified to $\frac{1}{p^2}$. In this case, The importance loss is proportional to the $\varepsilon$ and inversely proportional to $p^2$.

In fact, this result means that the impact of channel and that of source can be decouple under certain condition. On the one hand, $p$ is the attribute of source and has nothing to do with channel, so $s(p)$ is the representation of source. Obviously, $s(p)$ decreases with the increasing of $p$, so a little loss of information leads to heavy loss when $p$ is very small (the message itself is important in this case). On the other hand, $\varepsilon$ describes the channel performance, such as the channel capacity (the channel capacity of BSC is $1-H(\varepsilon)$ \cite{Elements}). If channel capacity is relatively small, which means that $\varepsilon$ is big, the importance loss will be large. Actually, in the actual communication process, the source can not be chosen and we can only improve the channel performance. In order to improve communication quality in BSC, what we can do is make $\varepsilon$ smaller, which is consistent with the known result in information theory.

\subsection{NMIM Loss Distortion}\label{sec:MIM Loss}
In this section, we still consider the Bernoulli($p$) source, but the general binary channel is analyzed rather than BSC. In fact, this special transmission process can also be seen as source coding. The basic problem in NMIM loss distortion is what the maximum expected NMIM loss is as given a particular distortion. Here the NMIM loss distortion function is defined as follows.
 \begin{defn}\label{Rate Distortion}
\par The NMIM loss distortion function $R^{(MIM)}(D)$ for a source $X$ with distortion measure $d(x,{\hat x})$ is defined as
\begin{equation}
{R^{(MIM)}}(D) = \mathop {\max }\limits_{p\left( {\hat x|x} \right):\sum\nolimits_{(x,\hat x)} {p(x)p(\hat x|x)d(x,\hat x) \le D} } \left\{ { \mathscr{L}({\textbf{p}_x}) -  \mathscr{L} ({\textbf{p}_{\hat x}})} \right\}.
\end{equation}
\end{defn}

This function describes the upper bounds of changes of NMIM with distortion. Obviously, the NMIM loss distortion function  $R^{(MIM)}(D)$ is the supremum of the loss of NMIM for a given distortion D.
${ \mathscr{L}({\textbf{p}_x}) -  \mathscr{L} ({\textbf{p}_{\hat x}})}$ describes the changes of NMIM before and after coding. In the best case, this difference value is zero if $\mathscr{L}({\textbf{p}_x}) =\mathscr{L} ({\textbf{p}_{\hat x}})$, which means no NMIM loss in coding process. In fact, NMIM invariably changes more or less due to distortion.

In fact, there is a tradeoff between NMIM loss and distortion. We would like calculate the NMIM loss distortion function for some simple sources to explain its physical meaning. Let $R^{(MIM)}(D)$ describe NMIM loss of Bernoulli($p$) source and the distortion is less than or equal to $D$.

\begin{prop}\label{thm:Rate MIM LOSS}
(NMIM-loss-distortion region). When Hamming distortion measure D ($D\ge0$) is used, the NMIM loss distortion function for a Bernoulli(p) source is given by
\begin{flalign}
&{R^{(MIM)}}(D)= 
 & \left\{
   \begin{aligned}
 &\log{e^2}+\delta(p)-\log((p + D){e^{{1 \over {p + D}}}} + (1 - p - D){e^{{1 \over {1 - p - D}}}}), p+D<{1\over2}, 0\le D \le 1-p \\
 &\log{e^2}+\delta(p)-\log((p - D){e^{{1 \over {p - D}}}} + (1 - p + D){e^{{1 \over {1 - p + D}}}}), p-D>{1\over2}, 0\le D \le p  \\
 &\delta(p), else,  
   \end{aligned}
   \right.
\end{flalign}
where $\delta(p)=\log({p{e^{{1 \over p}}} + (1 - p){e^{{1 \over {1 - p}}}}})-\log{e^2}$.
\end{prop}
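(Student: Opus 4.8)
The plan is to exploit the fact that in the definition of $R^{(MIM)}(D)$ the term $\mathscr{L}(\textbf{p}_x)$ is fixed by the source, so that only $\mathscr{L}(\textbf{p}_{\hat x})$ varies with the test channel $p(\hat x|x)$. Maximizing $\mathscr{L}(\textbf{p}_x)-\mathscr{L}(\textbf{p}_{\hat x})$ is therefore the same as \emph{minimizing} $\mathscr{L}(\textbf{p}_{\hat x})$; and since $\mathscr{L}$ depends only on the output marginal, I would write $\textbf{p}_{\hat x}=(q,1-q)$ and reduce the whole problem to
\begin{equation}
R^{(MIM)}(D)=\mathscr{L}(\textbf{p}_x)-\min_{q\in \mathcal{Q}(D)}\log\big(M(q)+M(1-q)\big),
\end{equation}
where $\mathcal{Q}(D)$ is the set of output marginals reachable under the Hamming constraint $P(X\neq\hat X)\le D$.

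First I would characterize $\mathcal{Q}(D)$ by a transport argument: writing the joint law of $(X,\hat X)$ in terms of the probability mass moved between the two symbols and imposing $P(X\neq\hat X)\le D$, the reachable value $q$ forms an interval, which I expect to be $q\in[\max(0,p-D),\min(1,p+D)]$, the endpoints being attained by pushing as much mass as the budget $D$ permits in either direction. The side conditions $0\le D\le 1-p$ and $0\le D\le p$ in the statement are exactly what guarantees that the endpoints $p+D$ and $p-D$ are attainable marginals.

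Next I would analyze the scalar function $g(q)=M(q)+M(1-q)$. It is symmetric about $q=1/2$, and by the lower bound property \eqref{equ:lowerbound} (the uniform distribution is the unique minimizer, with value $n-1=1$ at $n=2$) its global minimum sits at $q=1/2$. To pin down the minimizer over an interval that may miss $1/2$, I would establish that $M$ is convex: a direct computation gives $M''(x)=e^{1/x-1}/x^{3}>0$ on $(0,1)$, so $M'$ is increasing and $g'(q)=M'(q)-M'(1-q)<0$ for $q<1/2$. Hence $\mathscr{L}(q,1-q)=\log g(q)$ strictly decreases on $(0,1/2]$ and strictly increases on $[1/2,1)$.

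The three branches then follow by locating $1/2$ relative to $\mathcal{Q}(D)$: if $p+D<1/2$ the minimizer is the right endpoint $q=p+D$; if $p-D>1/2$ it is the left endpoint $q=p-D$; otherwise $1/2\in\mathcal{Q}(D)$ and the minimizer is $q=1/2$. Substituting each minimizer, rewriting $M(x)=xe^{1/x}e^{-1}$, and using $\mathscr{L}(\textbf{p}_x)=\delta(p)+\log e^2-\log e$ together with $\mathscr{L}(1/2,1/2)=\log e$, I recover the stated formula, the middle (constant) branch collapsing exactly to $\delta(p)$. The main obstacle I anticipate is the unimodality step, namely verifying the convexity of $M$ so that the optimum over the reachable interval is guaranteed to sit at the endpoint nearest $1/2$, while simultaneously keeping the achievability conditions on $D$ aligned with the endpoints of $\mathcal{Q}(D)$; the algebraic conversion in the final substitution is then routine.
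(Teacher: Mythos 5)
Your proposal is correct and follows essentially the same route as the paper's proof: both reduce the problem to minimizing $\mathscr{L}(\textbf{p}_{\hat x})=\log\big(M(q)+M(1-q)\big)$ over the interval of output marginals reachable under the Hamming budget (the paper's substitution $t=p-p_{01}+p_{10}$), and then place the minimizer at the endpoint nearest $q=1/2$, or at $1/2$ itself, using the symmetry and monotonicity of that scalar function. The only differences are cosmetic: you prove the unimodality via convexity of $M$ (computing $M''(x)=e^{1/x-1}/x^{3}>0$), a step the paper merely asserts, and you write the reachable set compactly as $[\max(0,p-D),\min(1,p+D)]$ where the paper enumerates the same set in four cases.
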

\begin{proof}
Refer to the Appendix \ref{Appendices F}.
\end{proof}

Fig.\ref{fig:rate1} illustrates NMIM loss distortion function. The figure shows that the NMIM loss distortion increases with the increasing of $D$ when $D<0.5-p$ and it stays the same when $D \ge 0.5-p$, and thus the turning point of every line is at the point $(0.5-p,\delta(p))$. From Fig.\ref{fig:rate1}, it is also observed that a small distortion will cause a big loss of NMIM if $p$ is very small.
\begin{figure}
  \centering
  \includegraphics[width=0.45\textwidth]{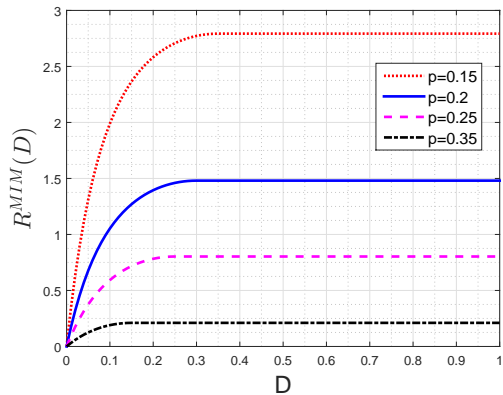} \\
  \caption{NMIM loss distortion function for a Bernoulli(p) source. The base of the logarithm is $e$ in NMIM.}\label{fig:rate1}
\end{figure}

\subsection{Maximum transmission}
When a message is transmitted from one side to another, there are double layers of characteristics for it, where the first layer describes the amount of information, and the second layer performs importance marking. The amount of information describes how much resources we need to expend for storing or transmitting and it is usually measured by Shannon entropy in the actual communication. The importance determines what the cost is once transmission errors appear. Usually, only important part of message is effective and people do not need the whole message itself. Therefore, NMIM is the proper importance measure in such kinds of applications.

Here we focus on the transmission problem to maximize the received entropy with limited NMIM loss and the channel capacity. This model is summarized as follows. For a transmission process $X \to Y$, the entropy in transmitting terminal $H(X)$ and transmission time $t$ determine the bit rate. In order to distortionless transmission, we have $H(X)/t \le C$ where $C$ is channel capacity in the required transmission time, which shows the limit on transmission due to channel itself. Furthermore, human beings expect importance loss as small as possible, so people usually prescribe a limit to importance loss, such as $0 \le {\psi(X,Y)}  \le \Delta$, where $\Delta$ is the maximum allowable change of NMIM. In fact, this condition ensures that the receiver can afford the loss. Within this region, people would like to maximize entropy in receiving terminal, which means receiver can get as much effective information as possible . The mathematical expression for this model is given by
\begin{flalign}\label{equ:HC1 optization}
\mathcal{P}_3: \,\,\mathop {\max }\limits_Y\,\,\, &H(Y)  \\
\textrm{s.t.}\,\,\,& 0\le {\psi(X,Y)}  \le \Delta \tag{\theequation a}\label{equ:HC1 optization a}\\
& H(X)/t\le C  \tag{\theequation b}\label{equ:HC1 optization b}
\end{flalign}

Unfortunately, it is not a convex optimization problem, so the solution to $\mathcal{P}_3$ is hard to get in general. In order to facilitate the analysis, we consider this problem in the extreme case firstly. When $\Delta=0$, which means that there is no NMIM change before and after transmission, the message $X$ is transmitted to the receiving terminal perfectly. As a result, $\mathop {\max }\limits_Y H(Y)=\mathop {\max }\limits_X H(X)=Ct$. On the other hand, when $\Delta \ge \mathscr{L}({\textbf{p}_x})+ (n-1)$, the distribution of $Y$ can be uniform distribution according to (\ref{equ:lowerbound}). At this time, $\mathop {\max }\limits_Y H(Y) =\log_2 n$. However, when $0<\Delta<\mathscr{L}({\textbf{p}_x})+ (n-1)$, the solution is fairly complicated.

For simplifying the analysis, we consider this maximum transmission problem in Bernoulli($p$) source as a typical example.

\begin{prop}\label{thm:HC1}
For a Bernoulli($p$) source, if $\Delta\ge0$ and $0<Ct \le 1$ bits, the solution to the constrained maximization $\mathcal{P}_3$ is given by
\begin{equation}
\begin{split}
 \mathop {\max }\limits_Y H(Y)=
  \left\{
   \begin{aligned}
   &Ct,\Delta=0\\
   &H(p_0+D^{(MIM)}(\Delta)),0<\Delta \le \delta(p_0)\\
   &1,\Delta>\delta(p_0)
   \end{aligned}
   \right. ,
   \end{split}
\end{equation}
where $H(p_0)=Ct$ and $\delta(p_0)$ is defined in (\ref{equ:delta_p0}). $D^{(MIM)}(\cdot)$ is the inverse function of $R^{(MIM)}(D)$.
\end{prop}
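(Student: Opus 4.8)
The plan is to collapse the two-constraint, non-convex maximization $\mathcal{P}_3$ to a one-parameter problem and then read the answer off the NMIM-loss-distortion function of Proposition \ref{thm:Rate MIM LOSS}. First I would argue that at the optimum the channel constraint (\ref{equ:HC1 optization b}) is tight. Since the source is binary, $H(X)\le 1$ bit, and because $0<Ct\le 1$ the inequality $H(X)\le Ct$ is active at the optimum: any slack in $H(X)$ would only make the transmit law more peaked, which (by the uniform-distribution/lower-bound property and the monotonicity of $\mathscr{L}$ for a Bernoulli law) raises $\mathscr{L}(\textbf{p}_x)$ and thereby enlarges the NMIM loss needed to push $Y$ toward the uniform law. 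Hence the best transmit law is $X\sim\mathrm{Bernoulli}(p_0)$ with $H(p_0)=Ct$ and $p_0\le 1/2$, which is exactly the $p_0$ in the statement.

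Next I would parametrize the received law as $Y\sim\mathrm{Bernoulli}(q)$. Because $\mathscr{L}$ of a Bernoulli law is symmetric about $1/2$ and strictly decreasing as $q\to 1/2$, the sign constraint $\psi(X,Y)\ge 0$ forces $q$ to lie between $p_0$ and $1/2$: moving away from $1/2$ would raise the NMIM and make $\psi$ negative. On this interval the Hamming distortion is $D=q-p_0$, and a direct comparison with the first branch of Proposition \ref{thm:Rate MIM LOSS} shows that the loss $\psi(X,Y)=\mathscr{L}(\textbf{p}_x)-\mathscr{L}(\textbf{p}_y)$ equals $R^{(MIM)}(D)$ for the source $p_0$, since both reduce to the same difference of $\log$-sums once the leading $\log e^2$ in $R^{(MIM)}(D)$ is cancelled against the $-\log e^2$ offset inside $\delta(p_0)$. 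As $R^{(MIM)}(\cdot)$ is increasing on $[0,\tfrac12-p_0]$, the loss constraint $\psi\le\Delta$ is equivalent to $D\le D^{(MIM)}(\Delta)$, while the objective $H(Y)=H(p_0+D)$ is increasing in $D$ for $q\le 1/2$; so the optimum takes the largest admissible distortion.

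This yields the three regimes directly. When $\Delta=0$ the only feasible move is $D=0$, giving $\max_Y H(Y)=H(p_0)=Ct$. For $0<\Delta\le\delta(p_0)$ the maximizer is $D=D^{(MIM)}(\Delta)$, so $\max_Y H(Y)=H\!\left(p_0+D^{(MIM)}(\Delta)\right)$; here the bound $\Delta\le\delta(p_0)$ guarantees $p_0+D^{(MIM)}(\Delta)\le 1/2$, because $\delta(p_0)$ is precisely the loss incurred when $q$ reaches $1/2$, i.e. $R^{(MIM)}(\tfrac12-p_0)=\delta(p_0)$ using $\mathscr{L}(\text{uniform})=n-1=1$. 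When $\Delta>\delta(p_0)$ the budget exceeds the loss needed to reach the uniform law, so $Y$ can be taken uniform and $\max_Y H(Y)=H(1/2)=1$ bit. I would close by checking continuity at the breakpoints: $D^{(MIM)}(0)=0$ matches the first two pieces at $\Delta=0$, and $D^{(MIM)}(\delta(p_0))=\tfrac12-p_0$ matches the last two at $\Delta=\delta(p_0)$.

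The main obstacle is the reduction of the first paragraph: showing rigorously that the joint optimization over the transmit law and the channel collapses to the single scalar $D$, that is, that saturating $H(X)=Ct$ and sliding one Bernoulli parameter $q$ from $p_0$ toward $1/2$ is globally optimal and not merely optimal within this parametric family. This step rests on the monotonicity of $\mathscr{L}$ for Bernoulli laws together with the identification $\psi=R^{(MIM)}(D)$; once those are in hand the rest is routine, which is essential because $\mathcal{P}_3$ is non-convex and the argument must proceed by exploiting monotonicity rather than by Lagrangian stationarity.
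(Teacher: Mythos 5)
Your proposal is correct and follows essentially the same route as the paper's Appendix G: fix $X\sim\mathrm{Bernoulli}(p_0)$ with $H(p_0)=Ct$, parametrize $Y$ by $q=p_0+D$, convert the loss constraint into $D\le D^{(MIM)}(\Delta)$ via $R^{(MIM)}$, and use the monotonicity of $H$ on $(0,0.5]$ to obtain the three regimes with saturation at $\delta(p_0)$. Your write-up is in fact slightly more careful than the paper's, since you explicitly justify tightness of the capacity constraint and verify the identity $\psi(X,Y)=R^{(MIM)}(D)$ on the Bernoulli family (including the cancellation of the $\log e^2$ terms), steps the paper takes for granted.
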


\begin{proof}
Refer to the Appendix \ref{Appendices G}.
\end{proof}

\begin{rem}\label{rem:HC1}
Proposition \ref{thm:HC1} shows that there are growth region and saturation region for the maximum entropy of receiving terminal with respect to $\Delta$. The turning point is at $\Delta=\delta(p_0)$.
\end{rem}
Remark \ref{rem:HC1} shows that people can not increase $\Delta$ unlimitedly to obtain better performance when other parameter fixed. $\delta(p_0)$ is the minimum allowable NMIM loss when the receiver wants to maximize the entropy of the information.

 Moreover, according to $H(p_0)=Ct$, so $p_0=H^{-1}(Ct)$ when we consider $0 \le p_0 \le 0.5$. Refer to \cite{Elements}, $H^{-1}(\cdot)$ is monotonic increasing function, which means $p_0$ increases with increasing of $Ct$. In fact, the monotonicity of $\delta(\cdot)$ is similar with $L(\cdot)$ which is monotonic decreasing in $(0,0.5]$ according to \ref{sec:monotonicity}. As a result, $\delta(p_0)$ decreases with increasing of $Ct$, which means the receiving amount of information increasing with increasing of the product of $C$ and $t$ when the allowed importance loss is fixed. It also shows that the channel capacity and transmission time affect the performance of maximum transmission together. In fact, $Ct$ is mobile service, which reflects the cumulative transmitted information for a channel within a period of time \cite{dong2012deterministic}.

Based on the result discussion above, NMIM can be used as a very useful importance measure to help us analyze and design practical communication system.

\section{Numerical Results}
In this section, numerical results will be presented to validate the above results in this paper.

\subsection{The properties of minimum probability}
First of all, the relationship between NMIM of total events and that of minimum probability event is discussed when the probability distribution is Zipf, Normal and Rayleigh distribution. The probability of Zipf distribution is $P\{X=k\}={Z}/{k^{1.01}},k=1,2,...,n$ and $Z={[\sum_{k=1}^n {(1/k)}^{1.01}]}^{-1}$. The probability of Normal distribution is $P\{X=k\}=(1/{\sqrt{2\pi \sigma^2}})e^{-{(k-\mu)}^2/{2 \sigma^2}},k=1,2,...,n-1$ where $\mu=0.51 \times n$ and $\sigma^2=n$ and $P\{X=n\}=1-\sum_{k=1}^{n-1} P\{X=k\}$ for normalization. The probability of Rayleigh distribution is $P\{X=k\}=(k/b^2)e^{-k^2/(2b^2)},k=1,2,...,n-1$ where $b=n/2.5$ and $P\{X=n\}=1-\sum_{k=1}^{n-1} P\{X=k\}$ for normalization. The range of event number $n$ is selected from $5$ to $20$. For these three typical distributions, it is easy to find that all the minimum probabilities are unique and far less than ${1 \over {1 + \ln (n - 1)}}$, which means that the condition of Proposition \ref{thm:summax} is met.

Fig. \ref{fig:MIM_minmum1} shows the gap between $\mathscr{L}({\textbf{\emph{p}}})$ and $L(p_{\min})$ is less than a constant ($\log2$), which proves validity of Proposition \ref{thm:summax}. Moreover, $\mathscr{L}({\textbf{\emph{p}}}) -L(p_{\min})$ decreases with increasing of $n$. In fact, this gap will almost disappear in comparison to $L(p_{\min})$ when $n$ is relatively large. At this time, we can use $\mathscr{L}({\textbf{\emph{p}}}) \approx L(p_{\min})$, which is shown in Fig. \ref{fig:MIM_minmum}.

\begin{figure}

\begin{minipage}{0.48\linewidth}
  \centerline{\includegraphics[width=8.0cm]{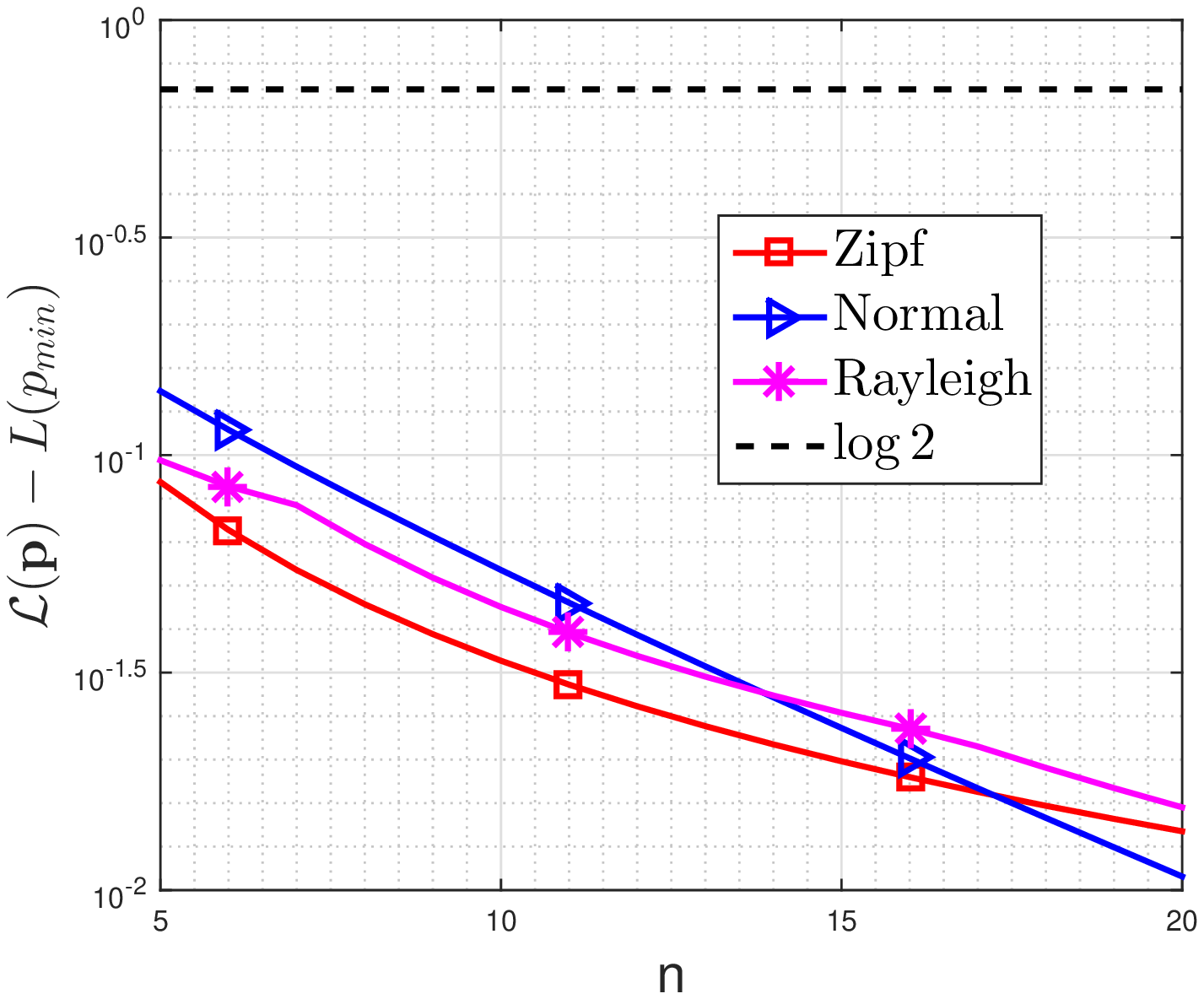}}
  \caption{$\mathscr{L}({\textbf{\emph{p}}})-L(p_{\min})$ vs. n. (${\mathcal{L}}(P)$ in axis Y stands for $\mathscr{L}({\textbf{\emph{p}}})$)}\label{fig:MIM_minmum1}
\end{minipage}
\hfill
\begin{minipage}{0.48\linewidth}
  \centerline{\includegraphics[width=8.0cm]{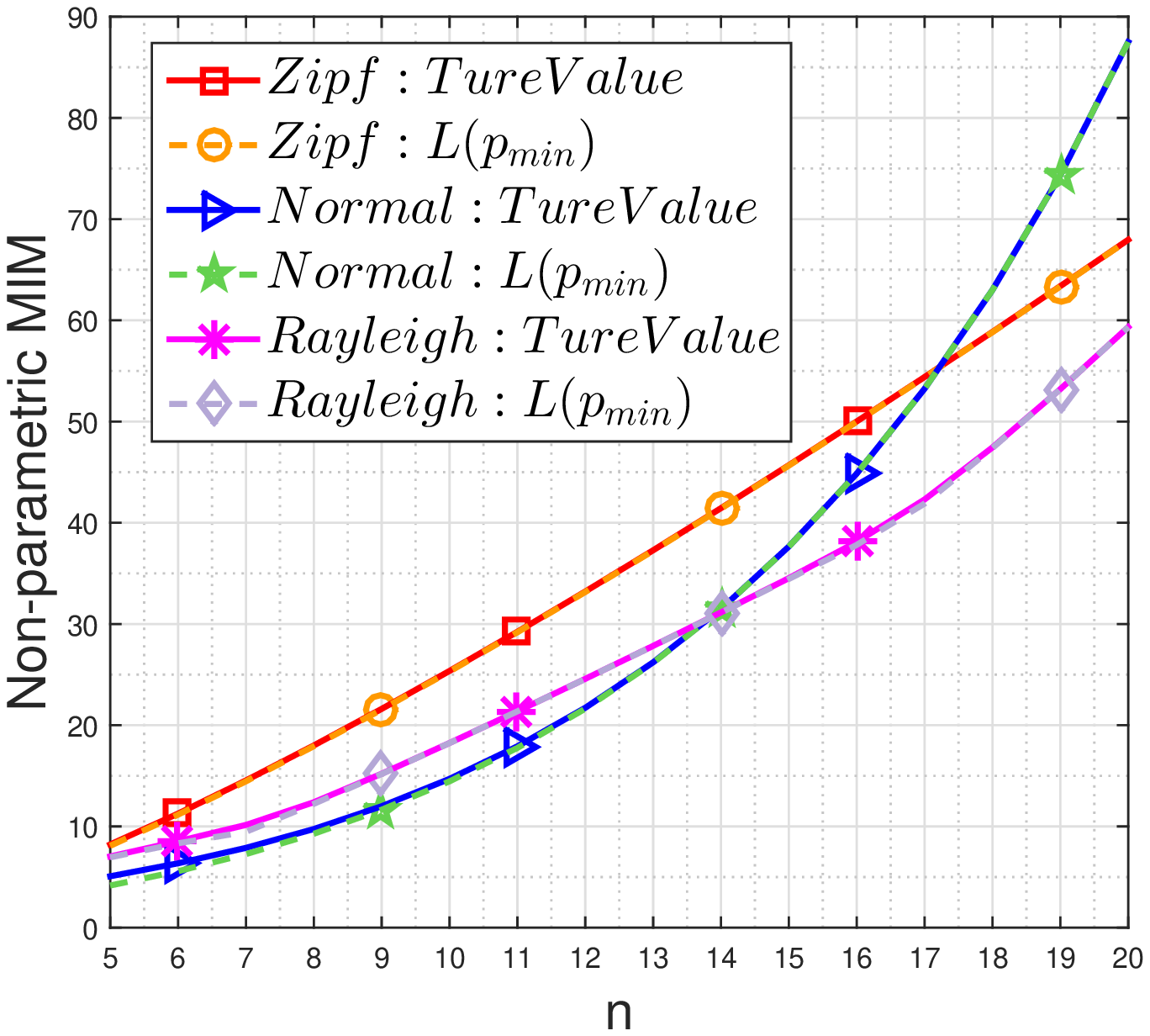}}
  \caption{NMIM vs. n.}\label{fig:MIM_minmum}
\end{minipage}
\end{figure}



\subsection{Compressed coding by NMIM}
Next, we focus on conducting the simulation by computer to compare and analyze the compressed coding. The initial code length of every events $L$ is $100$. The required coding length $K$ is varying from $10$ to $200$. The probability of events ${\textbf{p}}=(0.010,0.215,0.037,0.292,0.446)$. The size of alphabet for exponent $\gamma$ is $2$. To better show the performance, some other coding schemes are listed here. The code length is distributed equally in Code 1. The code length in Code 2 is assigned to each events according to its possession rate in total probability. More code size is assigned to the one with smaller probabilities in order to reduce the importance loss.

Fig. \ref{fig:MIMcode1} and Fig. \ref{fig:MIMcode2} respectively show performance of message importance loss in the reciprocal and exponent error model. Some observations are obtained. The constraints on the average code length is true. That is, each event's average code length is smaller than the uncompressed value. More important, the average code length of NMIM code is the smallest in most time among the three considered encoding strategies. The loss of message importance decreases with increasing of $K$ for all the compressed encoding strategies. There exists a threshold $K_0$ ($K_0=100$ in Fig. \ref{fig:MIMcode1} and $K_0=140$ in Fig. \ref{fig:MIMcode2}). When $K>K_0$, the message importance loss for NMIM code will almost disappear. In general, there is a tradeoff between average code length and accuracy for any compressed coding, but NMIM provides a new compressed way taking events importance into account.

There are also some difference between Fig. \ref{fig:MIMcode1} and Fig. \ref{fig:MIMcode2}. In Fig. \ref{fig:MIMcode1}, the message importance loss decreases slowly when $K<90$ and decreases rapidly when $K>90$. However, in Fig. \ref{fig:MIMcode2}, the loss of message importance reduces fast from the beginning and then slows down when $K>140$. The average code length in Fig. \ref{fig:MIMcode2} is bigger than that in Fig. \ref{fig:MIMcode1} when $K>140$.

\begin{figure}

\begin{minipage}{0.48\linewidth}
  \centerline{\includegraphics[width=8.0cm]{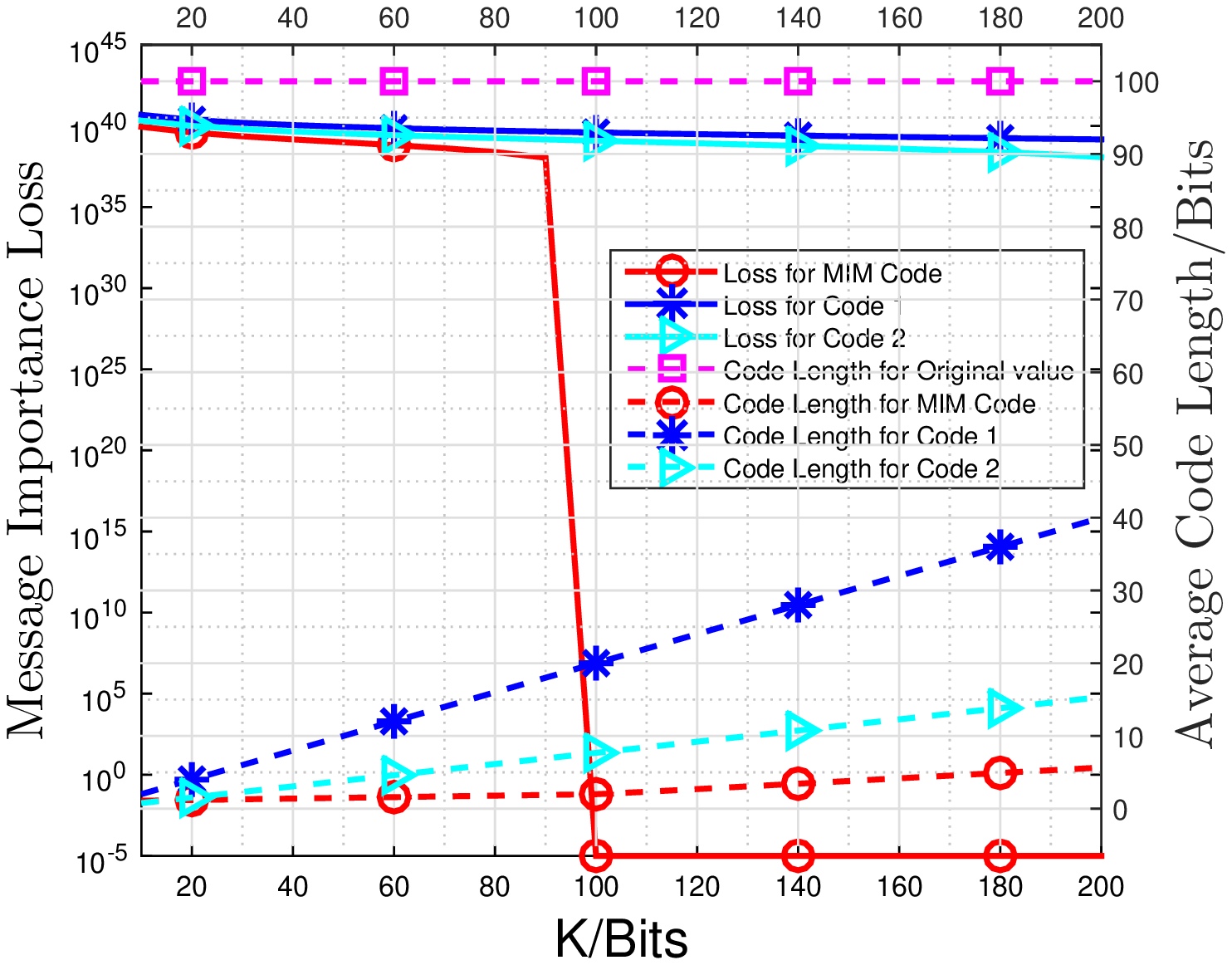}}
  \caption{The performance of importance loss and code length in reciprocal error model.}\label{fig:MIMcode1}
\end{minipage}
\hfill
\begin{minipage}{0.48\linewidth}
  \centerline{\includegraphics[width=8.0cm]{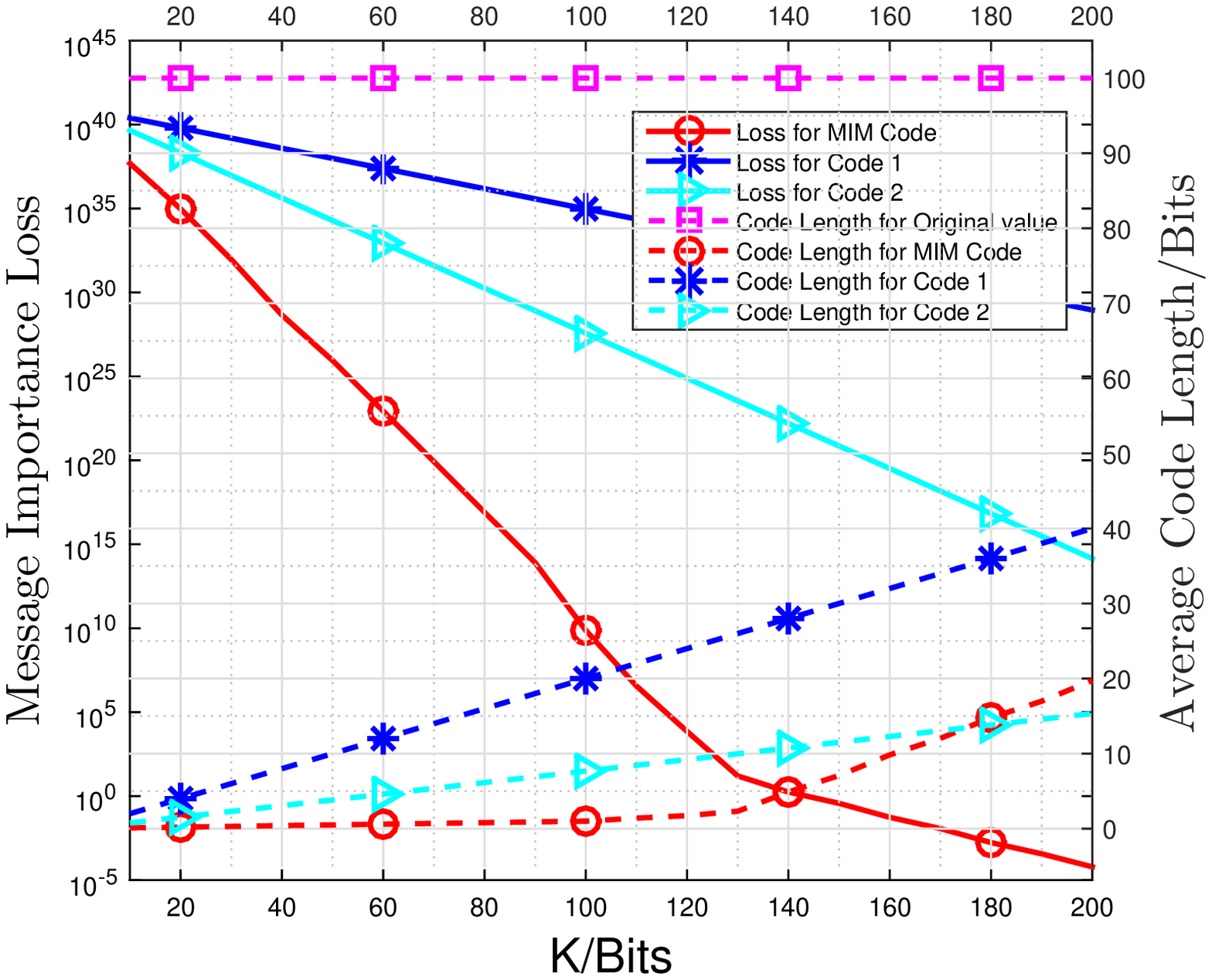}}
  \caption{The performance of importance loss and code length in exponent error model.}\label{fig:MIMcode2}
\end{minipage}
\end{figure}



\subsection{Transmission by NMIM}
Fig. \ref{fig:channel analysis BSC} shows the change of NMIM in BSC with $\varepsilon=0.01$. The approximate value is defined in (\ref{equ:approx MIM change}). The difference between approximate value and $\varepsilon s(p)$ is little, which proves the validness of (\ref{equ:BSC_proof b}). The change of NMIM decreases with increasing of probability $p$. It is noted that $\psi(x,y)$ is zero when $p=0.5$. At this time, the input and output of BSC are both uniform distribution. Obviously, the importance will not change in this case. Furthermore, the approximate value and $\varepsilon s(p)$ are very close to the true value. As a result, $\varepsilon s(p)$ is a good approximation in this case. In the range of the error permitted, ${\varepsilon}/{p^2}$ is also a feasible approximate value. In this figure, we only consider the case in which $0 < p < 0.5$ for the fact that Bernoulli(p) is symmetry about $p = 0.5$.

\begin{figure}
\begin{minipage}{0.48\linewidth}
  \centerline{\includegraphics[width=8.0cm]{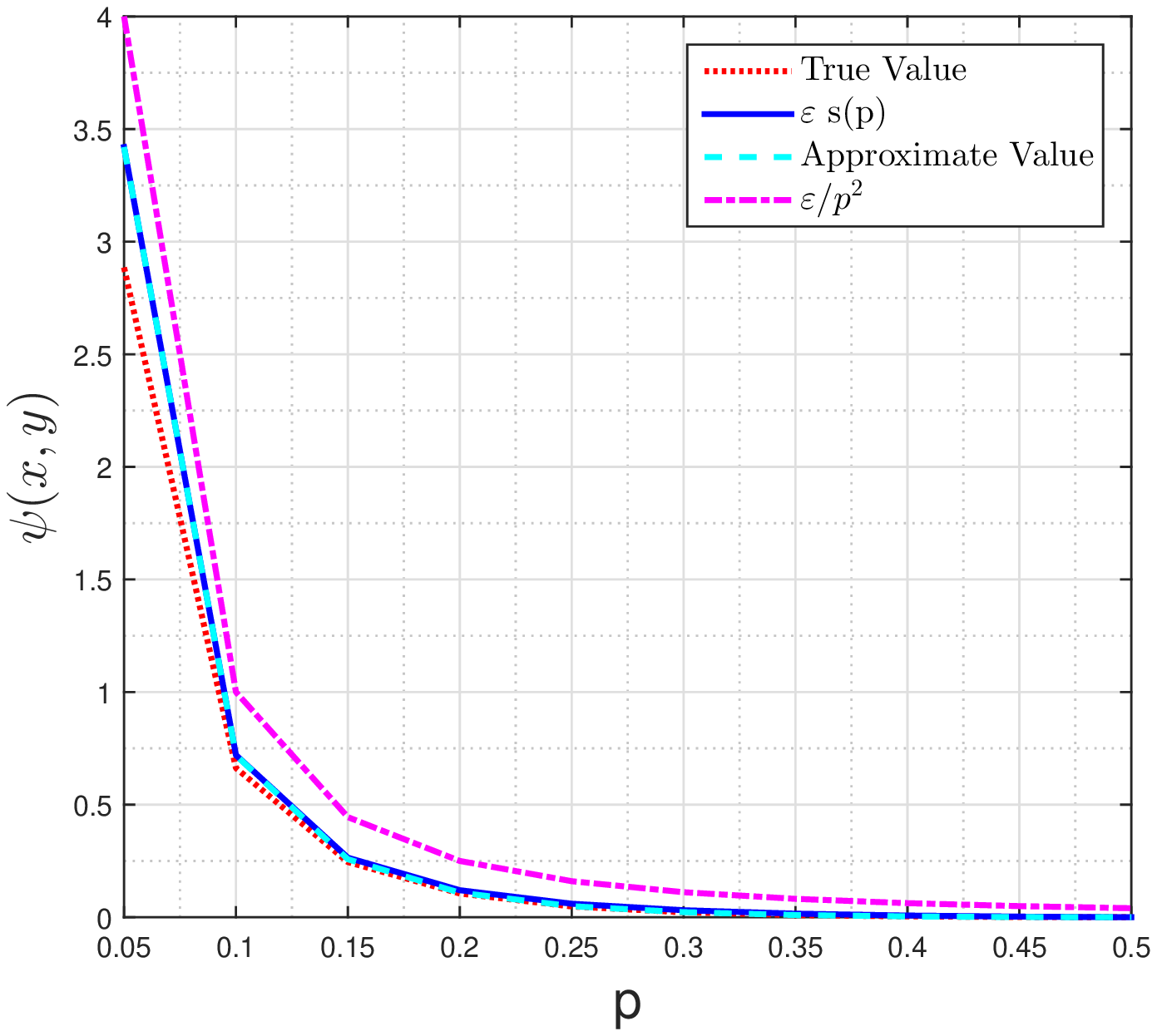}}
  \caption{$\psi(x,y)$ vs. $p$ when $\varepsilon=0.01$ in BSC.}\label{fig:channel analysis BSC}
\end{minipage}
\hfill
\begin{minipage}{0.48\linewidth}
  \centerline{\includegraphics[width=8.0cm]{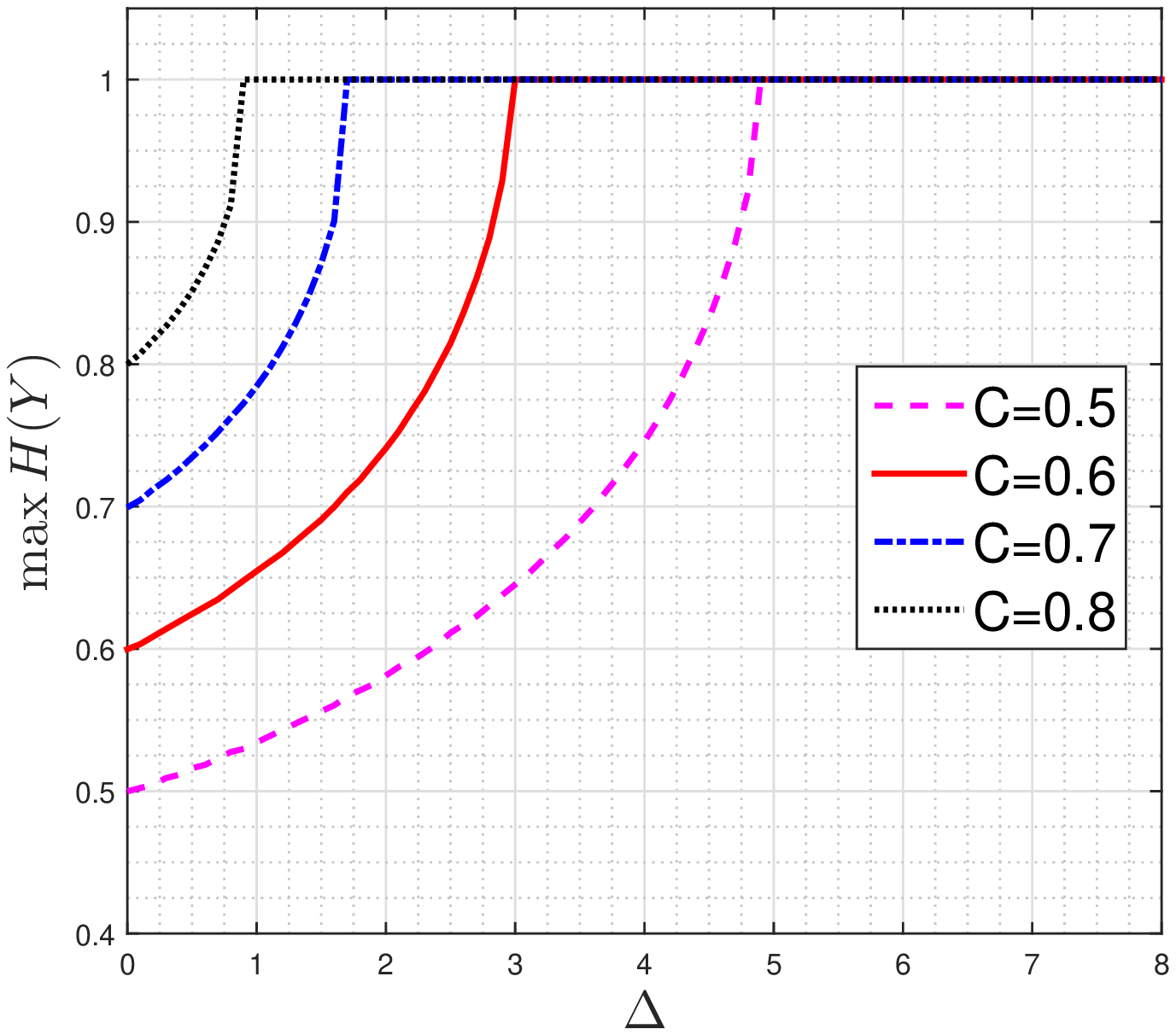}}
 \caption{$\mathop {\max }\limits_Y H(Y)$ vs. $\Delta$ in Bernoulli($p$) source when $t=1$. The base of the logarithm is $e$ in MIM and we take all logarithms to base $2$ in entropy.}\label{fig:HC1}
\end{minipage}
\end{figure}

Fig. \ref{fig:HC1} shows the relationship between the maximum entropy in the receiving terminal and the maximum allowable change of NMIM in Bernoulli($p$) source, which confirms Proposition \ref{thm:HC1}. The transmission time $t$ is $1$. For a given channel capacity, the maximum entropy in the receiving terminal increases from $C$ to $1$ when $0\le \Delta \le \delta(p_0)$ ($H(p_0)=C$). When $\Delta>\delta(p_0)$, the maximum entropy remains unchanged and the value is $1$. For the same $\Delta$, the maximum entropy increases with increasing of $C$ before it reaches saturation ($\max H(Y)=1$). In general, there are growth region ($0,\delta(p_0)$) and saturation region ($\delta(p_0),\infty$) for the maximum entropy, and the length of growth region decreases with increasing of channel capacity.

\section{Conclusion}
In this paper, we discussed the problem of storage and transmission in big data with taking message importance into account. As an extension of MIM, we defined NMIM to measure the message importance, in which we mainly focus on the minority subset different from Shannon entropy. By fully analyzing the properties of NMIM, we found that the difference between message importance of low-probability events and that of all the events is less than a constant, if either the low-probability value is extremely small or the events number to be considered is large. These two conditions are correspond to the two major characters of big data, which are the rare event finding and the large diversities of events.

Then we proposed an effective compressed storage strategy based on NMIM, which coincides with the cognitive mechanism of human beings, and greatly reduces the average code length with little importance loss. We further analyzed the change of importance during transmission and defined NMIM loss distortion function to characterize the tradeoff between NMIM loss and distortion. Importantly, we found that there are growth region and saturation region for maximum transmission problem, which helps to improve the design of practical communication systems.


\appendices
\section{Proof of Lemma \ref{lem:root}}\label{Appendices A}
\begin{proof}
\begin{equation}\label{equ:min}
\begin{split}
{{M({p_1})} \over {M({p_2})}} = {{{p_1}{e^{{1 \over {{p_1}}}}}} \over {{p_2}{e^{{1 \over {{p_2}}}}}}}.
\end{split}
\end{equation}
Because $0<p_1<p_2$, we might as well take $x = {{{p_1}} \over {{p_2}}}$ and $x \in (0,1)$. Therefore, (\ref{equ:min}) can be written as
\begin{equation} \label{equ:root}
\begin{split}
{{M({p_1})} \over {M({p_2})}} = x{e^{{{1 - x} \over {{p_1}}}}} \ge x{e^{n\left( {1 - x} \right)}}.
\end{split}
\end{equation}
This means, we only need to check $x{e^{n\left( {1 - x} \right)}} \ge n-1$ holds if $n$ is large enough. It is easy to check that $x{e^{n\left( {1 - x} \right)}}$ monotonically increases in interval $(0,{1\over n})$ and monotonically decreases in the interval $({1 \over n},1)$. As a result, if  $\exists$ $ x_1 \in (0,{1\over n})$ and $x_2 \in ({1 \over n},1)$ s.t. $x_i{e^{n\left( {1 - x_i} \right)}}=n-1$ for $i=1,2$, then $x{e^{{{1 - x} \over {{p_1}}}}} \ge n-1$ in the interval $[x_1,x_2]$. 

Solve the equation $x_1{e^{n\left( {1 - x_1} \right)}} = n - 1$ in the interval $ (0,{1 \over n})$, we get
\begin{flalign}\label{equ:solve_eq2}
\ln x_1 + n(1 - x_1) &= \ln(n - 1)  \\
\ln x_1 + n -nx_1 &= \ln(n-1)  \tag{\theequation a}\label{equ:solve_eq2 a}\\
\ln x_1 + n&= \ln(n - 1) \tag{\theequation b}\label{equ:solve_eq2 b}\\
x_1&=e^{\ln(n-1)-n}. \tag{\theequation c}\label{equ:solve_eq2 c}
\end{flalign}
(\ref{equ:solve_eq2 b}) is obtained by removing $n x_1$ for the fact that $\mathop {\lim }\limits_{n \to \infty } {n e^{\ln (n - 1) - n}} = 0$, which leads to $n x_1$ is close to $0$ when $n$ is very large. 

Similarly, solve the equation $x_2{e^{n\left( {1 - x_2} \right)}} = n - 1$ in the interval $ ({1 \over n},1)$, we have
\begin{flalign}\label{equ:solve_eq1}
\ln x_2 + n(1 - x_2) &= \ln(n - 1)  \\
n - \ln(n - 1) & = {{n}}x_2 - \ln(1 + x_2 - 1) \tag{\theequation a}\label{equ:solve_eq1 a}\\
n - \ln(n - 1) & = {{n}}x_2 - \left( {(x_2 - 1) + o(x_2 - 1)} \right) \tag{\theequation b}\label{equ:solve_eq1 b}\\
n - \ln(n - 1) &= {{n}}x_2 - (x_2 - 1)  \tag{\theequation c}\label{equ:solve_eq1 c}\\
 x_2 &= 1 - {{\ln (n - 1)} \over {n - 1}}. \tag{\theequation d}\label{equ:solve_eq1 d}
\end{flalign}
(\ref{equ:solve_eq1 b}) follows from Taylor Series Expansion and (\ref{equ:solve_eq1 c}) is obtained by removing $o(x_2-1)$. In fact, it requires that $x_2-1$ is very close to $0$. Such a condition is satisfied because $\mathop {\lim }\limits_{n \to \infty }  { {{\ln (n - 1)} \over {n - 1}} }  = 0$ which leads to $x_2-1=- { {{\ln (n - 1)} \over {n - 1}} }\approx 0$ when $n$ is very large.

Based on the discussion above, we have $x{e^{n\left( {1 - x} \right)}} \ge n-1$ in the interval $[ {e^{\ln (n - 1) - n}},1-{{\ln (n-1)}\over{n-1}}]$. If $n$ tends to infinity, this interval will become as $(0,1)$. The proof is completed. 

\end{proof}
\begin{rem}
For arbitrary positive number $x<1$, one can always find a $N_0$, when $n>N_0$, $x \in [ {e^{\ln (n - 1) - n}},1-{{\ln (n-1)}\over{n-1}}]$ holds.
\end{rem}

\section{Proof of Lemma \ref{lem:root1}}\label{Appendices B}
\begin{proof}
We prove this lemma in a similar way as Appendix \ref{Appendices A}. We also take $x = {{{p_1}} \over {{p_2}}}$ and $x \in (0,1)$. According to (\ref{equ:root}),
it is easy to check that $ x{e^{{{1 - x} \over {{p_1}}}}}$ monotonically increases in interval $(0,p_1)$ and monotonically decreases in the interval $(p_1,1)$. As a result, if  $\exists$ $ x_1 \in (0,p_1)$ and $x_2 \in (p_1,1)$ s.t. $ x_i{e^{{{1 - x_i} \over {{p_1}}}}}=n-1$ for $i=1,2$, then $x{e^{{{1 - x} \over {{p_1}}}}} \ge n-1$ in the interval $[x_1,x_2]$.

In a similar way as Appendix \ref{Appendices A}, we have
\begin{flalign}\label{equ:B solve_equation}
 x_1&=e^{\ln(n-1)-{1\over p_1}},   \\
 x_2 &= 1 - \frac{p_1}{1-p_1} \ln (n-1).  \tag{\theequation a}\label{equ:B solve_equation a}
\end{flalign}
(\ref{equ:B solve_equation}) is obtained by removing $x_1/p_1$ for the fact that $\mathop {\lim }\limits_{{p_1} \to 0} {1 \over {{p_1}}}{e^{\ln (n - 1) - {1 \over {{p_1}}}}} = 0$ which leads to $x_1/p_1$ is close to $0$ when ${p_1} \ll {1 \over {1 + \ln (n - 1)}}$. (\ref{equ:B solve_equation a}) requires $x_2-1$ is close to $0$. Such a condition is satisfied because $\frac{p_1}{1-p_1} \ln (n-1)\to0$ when ${p_1} \ll {1 \over {1 + \ln (n - 1)}}$, which leads to $x_2-1=- \frac{p_1}{1-p_1} \ln (n-1)\approx 0$.

Based on the discussion above, we have $ x{e^{{{1 - x} \over {{p_1}}}}}\ge n-1$ in the interval $[e^{\ln(n-1)-{1\over p_1}},1 - \frac{p_1}{1-p_1} \ln (n-1)]$. When ${p_1} \ll {1 \over {1 + \ln (n - 1)}}$, this interval will become as $(0,1)$. 
\end{proof}
\begin{rem}
For arbitrary positive number $x<1$ and constant $n$, one can always find a $p_0$, when $p_1<p_0$, $x \in[e^{\ln(n-1)-{1\over p_1}},1 - \frac{p_1}{1-p_1} \ln (n-1)]$ holds.
\end{rem}

\section{Proof of Proposition \ref{thm:code1}}\label{Appendices C}
\begin{proof}
We can write the constrained minimization $\mathcal{P}_1$ using Lagrange multipliers as the minimization of
\begin{equation}
\varphi  = \sum\limits_{i=1}^n {{p_i}{{e^{{1-p_i} \over {p_i}}}  {{l_i}^{-1}}}}  + \lambda \left( {\sum\limits_{i=1}^n {{l_i}}  - K} \right).
\end{equation}
Differentiating with respect to $l_i$, we get
\begin{equation}
{{\partial \varphi } \over {\partial {l_i}}} =  - {{M({p_i})} \over {l_i^2}} + \lambda
\end{equation}
where $M(p_i)=p_i e^{{1-p_i} \over {p_i}}$ according to (\ref{equdefn:MIM of Event}).

Setting the derivative to $0$, and we get ${l_i} =  {\sqrt {{{M({p_i})} / \lambda }} } $. Substituting this in the constraint $K = \sum\limits_{i=1}^n {{l_i}}$,  we get $\lambda  = {\left( {{{\sum\limits_{i = 1}^n {\sqrt {M({p_i})} } } \mathord{\left/
 {\vphantom {{\sum\limits_{i = 1}^n {\sqrt {M({p_i})} } } K}} \right.
 \kern-\nulldelimiterspace} K}} \right)^2}$, and hence
 \begin{equation}
{l_i} = {{{\sqrt {M({p_i})} } / ({\sum\limits_{i=1}^n { {\sqrt {M({p_i})} }  } }}K}).
\end{equation}
But since $l_i$ must be intergers, we will set $ {l_i} = \left\lceil{{{\sqrt {M({p_i})} } / ({\sum\limits_{i=1}^n { {\sqrt {M({p_i})} }  } }}K}) \right\rceil$.
\end{proof}

\section{Proof of Proposition \ref{thm:code2}}\label{Appendices D}
\begin{proof}
We can write the constrained minimization $\mathcal{P}_2$ using Lagrange multipliers as the minimization of
\begin{equation}
\varphi  =  {\sum\limits_{i = 1}^n {{p_i}} {e^{{{1 - {p_i}} \over {{p_i}}}}}{\gamma ^{ - {l_i}}}} + \lambda \left( {\sum\limits_{i=1}^n {{l_i}}  - K} \right).
\end{equation}
Differentiating with respect to $l_i$, we get
\begin{equation}
{{\partial \varphi } \over {\partial {l_i}}} =  - M\left( {{p_i}} \right){\gamma ^{ - {l_i}}}\ln \gamma  - \lambda  
\end{equation}
where $M(p_i)=p_i e^{{1-p_i} \over {p_i}}$ according to (\ref{equdefn:MIM of Event}). 

Setting the derivative to $0$, and we get ${l_i} ={{{\ln M\left( {{p_i}} \right) + \ln \ln \gamma  - \ln \left( { - \lambda } \right)} \over {\ln \gamma }} }$. 
In order to ensure $l_i>0$, so we take
\begin{equation}
 {l_i} = {\left( {{\ln M\left( {{p_i}} \right) + \ln \ln \gamma  - \ln \left( { - \lambda } \right)} \over {\ln \gamma }}\right) }^+.
 \end{equation}
Substituting this in the constraint $K = \sum\limits_{i=1}^n {{l_i}}$,  we find $\ln \left( { - \lambda } \right) = {{\left( {\tilde N\ln\ln\gamma  - K\ln \gamma  + \sum\limits_{i = 1}^{\tilde N} {\ln M({{\tilde p}_i})} } \right)} \mathord{\left/
 {\vphantom {{\left( {\tilde Nlnln\gamma  - K\ln \gamma  + \sum\limits_{i = 1}^{\tilde N} {\ln M({{\tilde p}_i})} } \right)} {\tilde N}}} \right.
 \kern-\nulldelimiterspace} {\tilde N}}$,
where ${\tilde N}$ is the number of $l_i$ which is greater than zero and $\{\tilde p_i, i=1,2,...,\tilde N\}$ is part of the commutative sequence of $\{ p_i, i=1,2,...,N\}$ in increment order which satisfies ${{\ln M\left( {{p_i}} \right) + \ln \ln \gamma  - \ln \left( { - \lambda } \right)} }>0$. Since $l_i$ must be intergers, the optimal code lengths are 
 \begin{equation}
{l_i} = {\left\lceil {{{\left( {\ln M({p_i}) - \sum\limits_{i = 1}^{\tilde N} {\ln M({{\tilde p}_i})} /\tilde N + K\ln \lambda /\tilde N} \right)} \mathord{\left/
 {\vphantom {{\left( {\ln M({p_i}) - \sum\limits_{i = 1}^{\tilde N} {\ln M({{\tilde p}_i})} /\tilde N + K\ln \lambda /\tilde N} \right)} {\ln \lambda }}} \right.
 \kern-\nulldelimiterspace} {\ln \gamma }}} \right\rceil ^ + }
\end{equation}
\end{proof}

\section{Proof of Proposition \ref{thm: BSC}}\label{Appendices E}
\begin{proof}
It is obtained that
\begin{flalign}\label{equ: E LP}
  L(p(1 - \varepsilon ) + \varepsilon(1 - p)) - L(p)   &= L(p + \varepsilon (1 - 2p)) - L(p)  \\
  &  = L(p) - {{1 - p} \over {{p^2}}}\varepsilon (1 - 2p) + o\left( {\varepsilon (1 - 2p)} \right) - L(p) \tag{\theequation a} \label{equ: E LP a} \\
  &  =  - {{1 - p} \over {{p^2}}}\varepsilon (1 - 2p) + o\left( {\varepsilon (1 - 2p)} \right) \tag{\theequation b} \label{equ: E LP b} \\
  &   \approx  - {{1 - p} \over {{p^2}}}\varepsilon (1 - 2p) . \tag{\theequation c} \label{equ: E LP c}
\end{flalign}
\eqref{equ: E LP a} follows from Taylor series expansion in (\ref{equ:taylor series expansion}) and \eqref{equ: E LP c} is obtained by removing $o\left( {\varepsilon (1 - 2p)} \right)$. In fact,
it requires that $o\left( {\varepsilon (1 - 2p)} \right)$ is very close to 0, which is satisfied because $\varepsilon \ll p$ and $p\ne 1/2$. Therefore, we find 
\begin{equation}\label{equ:C0}
\log M(p(1 - \varepsilon ) + \varepsilon(1 - p)) - \log M(p) =  - {{1 - p} \over {{p^2}}}\varepsilon (1 - 2p).
\end{equation}
Hence, we obatin
\begin{equation}\label{equ:C1}
M(p(1 - \varepsilon ) + \varepsilon(1 - p)) = {e^{ - {{1 - p} \over {{p^2}}}\varepsilon (1 - 2p)}}M(p).
\end{equation}
Likewise, we obtain
\begin{equation}\label{equ:C2}
M(p\varepsilon  + (1 - p)(1 - \varepsilon )) = {e^{{p \over {{{\left( {1 - p} \right)}^2}}}\varepsilon (1 - 2p)}}M(1 - p{\rm{)}}.
\end{equation}
Put (\ref{equ:C1}) and (\ref{equ:C2}) into $\psi(X,Y)$ and get
{\small
\begin{flalign}\label{equ:BSC_proof}
\psi(X,Y) & = \log\left( {M(p) + M(1 - p)} \right) - \log \left( {{e^{ - {{1 - p} \over {{p^2}}}\varepsilon (1 - 2p)}}M(p) + {e^{{p \over {{{\left( {1 - p} \right)}^2}}}\varepsilon (1 - 2p)}}M(1 - p{\rm{)}}} \right)  \\
      &  = \log\left( {M(p) + M(1 - p)} \right) - \log \left( {{e^{ - {{1 - p} \over {{p^2}}}\varepsilon (1 - 2p)}}\left( {M(p) + M(1 - p{\rm{) + }}({e^{{{3{p^2} - 3p + 1} \over {{p^2}{{\left( {1 - p} \right)}^2}}}\varepsilon (1 - 2p)}} - 1)M(1 - p{\rm{)}}} \right)} \right)  \tag{\theequation a}\label{equ:BSC_proof a}\\
        &  \approx \log\left( {M(p) + M(1 - p)} \right) - \log \left( {{e^{ - {{1 - p} \over {{p^2}}}\varepsilon (1 - 2p)}}\left( {M(p) + M(1 - p{\rm{)}}} \right)} \right)  \tag{\theequation b}\label{equ:BSC_proof b}\\
  &  = \log\left( {M(p) + M(1 - p)} \right) + {{1 - p} \over {{p^2}}}\varepsilon (1 - 2p) - \log \left( {M(p) + M(1 - p)} \right)  \tag{\theequation c}\label{equ:BSC_proof c}\\
  &  = {{(1 - p )(1 - 2p) } \over {{p^2}}}\varepsilon. \tag{\theequation d}\label{equ:BSC_proof d}
\end{flalign}
}
 \eqref{equ:BSC_proof b} is obtained by removing ${({e^{{{3{p^2} - 3p + 1} \over {{p^2}{{\left( {1 - p} \right)}^2}}}\varepsilon (1 - 2p)}} - 1)M(1 - p{\rm{)}}}$ because it is mush smaller than $\left( {M(p) + M(1 - p)} \right)$. Generally speaking, such a condition is satisfied in this scenario, which is shown in Fig. \ref{fig:channelBSC}.

For convenience, we define $s(p)={{(1 - p )(1 - 2p) } \over {{p^2}}}$, and we obtain 
\begin{equation}
\psi(X,Y)   ={ \mathscr{L}({\textbf{p}_x}) -  \mathscr{L} ({\textbf{p}_{y}})}\approx  \varepsilon s(p).
\end{equation}
The proof is completed. 

For convenience, we also define approximate value of MIM change as
{\small
\begin{equation}\label{equ:approx MIM change}
\psi_{approx}(X,Y) = \log\left( {M(p) + M(1 - p)} \right) - \log \left( {{e^{ - {{1 - p} \over {{p^2}}}\varepsilon (1 - 2p)}}\left( {M(p) + M(1 - p{\rm{) + }}({e^{{{3{p^2} - 3p + 1} \over {{p^2}{{\left( {1 - p} \right)}^2}}}\varepsilon (1 - 2p)}} - 1)M(1 - p{\rm{)}}} \right)} \right).
\end{equation}
}
\end{proof}

\section{Proof of Proposition \ref{thm:Rate MIM LOSS}}\label{Appendices F}
\begin{proof}
\begin{flalign}\label{equ:Rate MIM LOSS Fun}
{R^{(MIM)}}(D) &= \mathop {\max }\limits_{p\left( {\hat x|x} \right):\sum\nolimits_{(x,\hat x)} {p(x)p(\hat x|x)d(x,\hat x) \le D} } \left\{ { \mathscr{L}({\textbf{p}_x}) -  \mathscr{L} ({\textbf{p}_{\hat x}})} \right\} \\
& =  \mathscr{L}({\textbf{p}_x})  - \mathop {\min }\limits_{p\left( {\hat x|x} \right):\sum\nolimits_{(x,\hat x)} {p(x)p(\hat x|x)d(x,\hat x) \le D} } \left\{ {\mathscr{L}({\textbf{p}_{\hat x}})} \right\}.\tag{\theequation a} \label{equ:Rate MIM LOSS Fun a}
\end{flalign}
The key of this optimization problem is the second item in \eqref{equ:Rate MIM LOSS Fun a}. Define $p_{ij}=p(i,j)$ for convenience, and we can solve it as a constrained minimization
\begin{flalign}\label{equ:Rate MIM LOSS optization}
\mathcal{P}_{4-A}: \,\,\mathop {\min }\limits_{{p_{ij}}}\,\,\, &({p_{00}} + {p_{10}}){e^{{1 \over {{p_{00}} + {p_{10}}}}}} + ({p_{01}} + {p_{11}}){e^{{1 \over {{p_{01}} + {p_{11}}}}}}  \\
\textrm{s.t.}\,\,\,& {{p_{00}}{\rm{ + }}{{\rm{p}}_{01}} = p}, \quad {{p_{10}}{\rm{ + }}{{\rm{p}}_{11}} = 1 - p},\quad {{{\rm{p}}_{01}} + {p_{10}} \le D} \tag{\theequation a}\label{equ: Rate MIM LOSS optization a}\\
&  p_{ij}>0,\,\,\, i=0,1 \,\,j=0,1. \tag{\theequation b}\label{equ: Rate MIM LOSS optization b}
\end{flalign}
Simplify it, we have
\begin{flalign}\label{equ:Rate MIM LOSS optization Simplify}
\mathcal{P}_{4-B}: \,\,\mathop {\min }\limits_{{p_{01}},{p_{10}}}\,\,\, &(p - {p_{01}} + {p_{10}}){e^{{1 \over {p - {p_{01}} + {p_{10}}}}}} + ({p_{01}} + 1 - p - {p_{10}}){e^{{1 \over {{p_{01}} + 1 - p - {p_{10}}}}}} \\
\textrm{s.t.}\,\,\,& {{{{p}}_{01}} + {p_{10}} \le D},\,\,\,p_{01}>0,\,\,\, p_{10}>0  .\tag{\theequation a} \label{equ: Rate MIM LOSS optization Simplify a}
\end{flalign}
Without loss of generality, letting $t=p - {p_{01}} + {p_{10}}$, this minimization problem $\mathcal{P}_{4-B}$ can be rewritten as 
\begin{equation}
\mathop {\min }\limits_t \left\{ {t{e^{{1 \over t}}} + (1 - t){e^{{1 \over {1 - t}}}}} \right\}.
\end{equation}
The function $f(t)={t{e^{{1 \over t}}} + (1 - t){e^{{1 \over {1 - t}}}}}$ is symmetrical about $t={1\over2}$. It achieves the minimum when $t={1\over2}$ and it monotonically decreases in $(0,{1\over2})$. By linear programming, when ${{{{p}}_{01}} + {p_{10}} \le D}$, the range of $t$ is
\begin{equation}
t \in  \left\{
   \begin{aligned}
 & (p-D,p+D),0\le D< \min\{ p,1-p\} ,\\
 & (0,p+D),p\le D \le 1-p,\\
 & (p-D,1),1-p\le D \le p,\\
 & (0,1), D > \max\{ p,1-p\} .\\
   \end{aligned}
   \right. 
\end{equation}
Based on the discussion above, the NMIM loss distortion function for a binary source is
\begin{flalign}\label{equ:rate dis}
{R^{(MIM)}}(D)=
  \left\{
   \begin{aligned}
 & \log({p{e^{{1 \over p}}} + (1 - p){e^{{1 \over {1 - p}}}}})-\log((p + D){e^{{1 \over {p + D}}}} + (1 - p - D){e^{{1 \over {1 - p - D}}}}),\\
 &p+D<{1\over2}, 0\le D\le \min\{ p,1-p\} \quad or \quad p \le D \le 1-p,\\
 & \log({p{e^{{1 \over p}}} + (1 - p){e^{{1 \over {1 - p}}}}})-\log((p - D){e^{{1 \over {p - D}}}} + (1 - p + D){e^{{1 \over {1 - p + D}}}}),\\
 &p-D>{1\over2}, 0\le D\le \min\{ p,1-p\} \quad or \quad 1-p \le D \le p,\\
 & \log({p{e^{{1 \over p}}} + (1 - p){e^{{1 \over {1 - p}}}}})-\log{e^2}, else.
   \end{aligned}
   \right. 
\end{flalign}
For convenience, We define
\begin{equation}\label{equ:delta_p0}
\delta(p)= \log({p{e^{{1 \over p}}} + (1 - p){e^{{1 \over {1 - p}}}}})-\log{e^2}.
\end{equation}
Because $D\ge0$, $p+D<{1\over2}$ leads to $p<{1\over2}$. Similarly, $p-D>{1\over2}$ is equivalent to $p>{1\over2}$. Hence, (\ref{equ:rate dis}) can be rewritten as
\begin{flalign}
{R^{(MIM)}}(D)=
  \left\{
   \begin{aligned}
 &\log{e^2}+\delta(p)-\log((p + D){e^{{1 \over {p + D}}}} + (1 - p - D){e^{{1 \over {1 - p - D}}}}), p+D<{1\over2}, 0\le D \le 1-p,\\
 &\log{e^2}+\delta(p)-\log((p - D){e^{{1 \over {p - D}}}} + (1 - p + D){e^{{1 \over {1 - p + D}}}}), p-D>{1\over2}, 0\le D \le p,\\
 &\delta(p), else.
   \end{aligned}
   \right.
\end{flalign}
\end{proof}

\section{Proof of Proposition \ref{thm:HC1}}\label{Appendices G}
\begin{proof}
According to \cite{Elements}, $H(X)\triangleq H(p)$ is a monotone increasing function in $(0,0.5]$. Hence, $H(X)\le Ct$ leads to $0<p\le p_0 \le 0.5$ where $H(p_0)=Ct$.

$H(Y)=H(p+D)$. Letting $q=p+D$, we can write the constrained maximization $\mathcal{P}_3$ as the maximization of $H(q)$. Obviously, $H(q)$ increases in $(0,0.5]$ and decreases in $(0.5,1)$.

Actually, $R^{(MIM)}$ is the maximum of $\psi(X,Y)$. Refer to \ref{sec:MIM Loss} and substitute this in the constraint $\psi(X,Y) \le \Delta$, we obtain
\begin{equation}
\left\{\begin{array}{l@{\hspace{1cm}}l}
 R^{(MIM)}(D) =0,  \quad if \quad \Delta =0\\
R^{(MIM)}(D)\le \Delta, \quad if \quad 0<\Delta < \delta(p_0)\\
 R^{(MIM)}(D) =\delta(p_0), \quad else\\
\end{array}\right.\Rightarrow \left\{\begin{array}{l@{\hspace{1cm}}l}
D=0, \quad if \Delta=0\\
0\le D < D^{(MIM)}(\Delta),\,if \quad 0<\Delta < \delta(p_0)\\
D\ge 0, \quad \Delta \ge \delta(p_0).\\
\end{array}\right.
\end{equation}
where $D^{(MIM)}$ is the inverse function of $R^{(MIM)}(D)$.
Substituting it in $q=p_0+D$, we obtain
\begin{equation}\label{equ:E2}
\begin{split}
 \left\{
   \begin{aligned}
&q=p_0,  \quad if \quad \Delta =0\\
& p_0<q<p_0+D^{(MIM)}(\Delta)\le0.5,  \quad if \quad  0<\Delta < \delta(p_0)\\
& q>p_0,  \quad if \quad \Delta \ge \delta(p_0) \\
   \end{aligned}
   \right.,
   \end{split}
\end{equation}
where $p_0+D^{(MIM)}(\Delta)\le0.5$ is due to
\begin{equation}
p_0+D^{(MIM)}(\Delta)\le p_0+D^{(MIM)}(\delta(p_0))=p_0+0.5-p_0=0.5.
\end{equation}
Hence, the maximum value of $H(Y)$ is
\begin{equation}\label{equ:E3}
\begin{split}
\mathop {\max }\limits_Y H(Y) =\left\{
   \begin{aligned}
   &H(p_0)=Ct,  \quad if \quad \Delta =0,\\
&H(p_0+D^{(MIM)}(\Delta)),  \quad if \quad  0<\Delta < \delta(p_0),\\
&H(0.5)=1 ,  \quad if \quad \Delta \ge \delta(p_0). \\
   \end{aligned}
   \right.
   \end{split}
\end{equation}

\end{proof}






\end{document}